\documentclass[a4paper,fleqn]{cas-dc}

\usepackage[authoryear,longnamesfirst]{natbib}
\usepackage{mathtools}
\usepackage{amsthm}
\usepackage{multirow}
\IfFileExists{stix.sty}{}{\usepackage{lmodern}}
\usepackage{microtype}
\usepackage{placeins}
\usepackage{xurl}

\newtheorem{theorem}{Theorem}
\newtheorem{proposition}{Proposition}
\newtheorem{lemma}{Lemma}
\theoremstyle{definition}

\newtheorem{assumption}{Assumption}
\newtheorem{remark}{Remark}

\newcommand{\R}{\mathbb{R}}
\newcommand{\indep}{\perp\!\!\!\perp}
\newcommand{\notindep}{\not\!\perp\!\!\!\perp}
\newcommand{\origcgc}{\textsf{c-GC}}
\newcommand{\origcgcstar}{\textsf{c-GC}$^\ast$}
\newcommand{\fastcgc}{\textsf{fastC-GC}}
\newcommand{\fastcgcstar}{\textsf{fastC-GC}$^\ast$}
\newcommand{\order}{\mathcal{O}}
\newcommand{\graphpanel}[3]{%
  \begin{minipage}[t]{0.31\textwidth}
    \vspace{0pt}%
    \centering
    \includegraphics[width=\linewidth]{#1}\par
    \smallskip\footnotesize\textbf{(#2)}~#3
  \end{minipage}%
}
\newcommand{\chamberkey}{%
  \begin{minipage}[c]{0.18\textwidth}
    \vspace{0pt}%
    \centering\footnotesize
    \textbf{Variable key}\par\smallskip
    \fontsize{7.5}{9}\selectfont
    \begin{tabular}{@{}rl@{}}
    1 & Hatch position\\
    2 & Input load\\
    3 & Output load\\
    4 & Potentiometer 1\\
    5 & Potentiometer 2\\
    6 & Input current\\
    7 & Output current\\
    8 & Downwind pressure\\
    9 & Upwind pressure\\
    10 & Input fan speed\\
    11 & Output fan speed\\
    12 & Microphone\\
    13 & Intake pressure\\
    14 & Ambient pressure\\
    15 & Signal 1\\
    16 & Signal 2
    \end{tabular}
  \end{minipage}%
}
\newcommand{\riverskey}{%
  \begin{minipage}[c]{0.18\textwidth}
    \centering\footnotesize
    \textbf{Station-ID key}\par\smallskip
    \begin{tabular}{@{}rl@{}}
    A & 0\\
    B & 94\\
    C & 503\\
    D & 504\\
    E & 505
    \end{tabular}
  \end{minipage}%
}
\makeatletter
\def\@thehead{}
\def\@thefoot{}
\makeatother

\begin{document}

\let\WriteBookmarks\relax
\renewcommand{\floatpagefraction}{0.85}
\renewcommand{\dblfloatpagefraction}{0.95}
\renewcommand{\dbltopfraction}{0.95}
\renewcommand{\textfraction}{0.05}
\setlength{\textfloatsep}{12pt plus 2pt minus 2pt}
\setlength{\dbltextfloatsep}{12pt plus 2pt minus 2pt}
\setlength{\dblfloatsep}{10pt plus 2pt minus 2pt}
\makeatletter
\setlength{\@fptop}{0pt}
\setlength{\@fpsep}{12pt}
\setlength{\@fpbot}{0pt plus 1fil}
\setlength{\@dblfptop}{0pt}
\setlength{\@dblfpsep}{12pt}
\setlength{\@dblfpbot}{0pt plus 1fil}
\makeatother
\raggedbottom

\shorttitle{Efficient RCCP-Guided Graph Estimation}
\shortauthors{S. A. Adedayo}

\title[mode=title]{\fastcgc{} and \fastcgcstar{}: Efficient
RCCP-Guided Graph Estimation for Time Series}

\author[1]{S. A. Adedayo}[orcid=0000-0003-2990-142X]
\ead{sadiq.adedayo@univie.ac.at}

\affiliation[1]{organization={Univie Doctoral School of Computer Science
(DoCS)},
  addressline={Wahringerstra{\ss}e 29},
  city={Vienna},
  postcode={1090},
  state={Vienna},
  country={Austria}}

\hypersetup{
  pdftitle={fastC-GC and fastC-GC*: Efficient RCCP-Guided Graph Estimation for Time Series},
  pdfauthor={S. A. Adedayo}}

\begin{abstract}
  We introduce \fastcgc{} and \fastcgcstar{}, computationally efficient methods
  for time-series graph estimation using the two-test rule motivated by
  Reichenbach's common cause principle (RCCP) and the conditioning sets of
  causalised Granger causality (\origcgc{} and \origcgcstar{}). The methods use
  standardized data, ridge-residual correlation, and complete circular shift
  enumeration evaluated by Fourier transforms. Shared precision calculations
  and Schur downdates avoid repeated regression fits. On a single-threaded
  benchmark, \fastcgc{} was 68--277 times faster than the compared \origcgc{}
  implementation, and \fastcgcstar{} was 192--886 times faster than
  \origcgcstar{}. Across four suites with reference graphs, aggregate mean
  area under the precision--recall curve (AUPRC) differed from the corresponding
  baseline by at most 0.019, but F1 was lower at the selected thresholds in
  every suite. The computational gains therefore accompany a change in
  estimator and selected graphs. Among eight methods, Bagged PCMCI+ had the
  highest mean AUPRC on sparse VAR, an autocorrelated structural process, and
  DREAM3; CUTS+ led on Lorenz--96. Wind-tunnel and river-discharge applications
  illustrate differences between inferred graphs without establishing causal
  accuracy. We prove the computational identities and a restricted population
  adjacency result for \fastcgcstar{}. Causal interpretation requires the
  stated graphical assumptions; validity of the implemented tests additionally
  requires a suitable shift-invariant null law. Empirical null rejection rates
  informed the conditional threshold \(\beta=0.005\), without establishing
  general test validity or graph-wide error control.
\end{abstract}

\begin{keywords}
causalised-GC \sep Reichenbach's Principle \sep circular shift calibration
\sep ridge residualization \sep time series \sep causal discovery
\end{keywords}

\maketitle

\hypersetup{
  pdftitle={fastC-GC and fastC-GC*: Efficient RCCP-Guided Graph Estimation for Time Series},
  pdfauthor={S. A. Adedayo}}

\section{Introduction}
\label{sec:introduction}

Granger causality (GC) asks whether the past of one process improves prediction of
another process beyond the information already available in the target's
history~\citep{granger1969}. This predictive definition is useful, but a
predictive edge is not automatically a causal edge. Common causes, omitted
variables, model misspecification, and conditioning on colliders can all change
the resulting graph.

Reichenbach's common cause principle (RCCP) motivates checking a proposed
relation both before and after adjustment for possible common causes.
\citet{adedayo2026} applied this reasoning to lagged source--target pairs: an
unconditional test asks whether the variables are dependent, and a conditional
test asks whether dependence remains after adjustment for a specified set of
observed states. A candidate edge is retained only when both tests support
dependence. The \origcgc{} and \origcgcstar{} procedures differ in the
conditioning set used for the second test.

We introduce \fastcgc{} and \fastcgcstar{} as standalone algorithms that use
the same RCCP decision rule and the \origcgc{} and \origcgcstar{} conditioning
sets, respectively. Their estimators are nevertheless different. The correlation-based
formulation standardizes the lag-expanded rows, applies a positive covariance
ridge, and calibrates dependence with complete circular shift orbits. Covariance
calculations, residual recovery, and shift evaluation are organized so that
many candidate edges can be processed together. These choices can change an
inferred graph, so agreement with \origcgc{} or \origcgcstar{} is an empirical
question rather than part of the algorithm definition.

The central empirical question is whether these computational changes retain
useful graph-recovery performance. Across the reference-graph suites, the fast
methods closely tracked the aggregate ranking performance of their c-GC
counterparts, but produced lower F1 at the selected thresholds. The runtime
results and this accuracy tradeoff are evaluated separately from the
population-level causal argument.

Other time-series methods reduce the cost and variance of large conditioning
sets in different ways. A broad comparison of GC measures found
that the dimension-reduced PMIME, RCGCI, and RGPDC procedures performed well,
particularly in high-dimensional settings~\citep{siggiridou2019}. CEDAR
recently combined lag screening with targeted conditional-independence tests
for sparse autoregressive processes~\citep{fesanghary2026}; it is included as
a recent lag-resolved constraint-based comparator. Variable-lag GC instead
addresses delays that change over time, whereas time-reversed GC is a
directional check designed to reject effects caused by
mixing and measurement noise~\citep{amornbunchornvej2021,winkler2016}. Those
two methods address different failure modes from the fixed-lag construction
studied here and are not included in the comparison.

The main contributions are:

\begin{itemize}
  \item the \fastcgc{} and \fastcgcstar{} algorithms, which combine the RCCP
  decision rule and established conditioning sets with standardized,
  regularized dependence estimation;
  \item vectorized algebraic reformulations for complete circular shift
  enumeration, ridge-residual recovery, and \fastcgc{} Schur downdates from
  one global precision matrix;
  \item proofs of the intersection--union decision rule and the algebraic
  identities used by the correlation-based formulation, a restricted population-level
  causal result for positive-lag \fastcgcstar{} edges, and an explanation of
  why that graph-theoretic result is not an
  estimator-consistency theorem;
  and
  \item paired score-level and graph-recovery comparisons with \origcgc{},
  \origcgcstar{}, CEDAR, and other external causal-structure methods. A
  separate null study informs threshold selection and documents the shift
  test's finite resolution. A single-thread timing grid compares
  end-to-end runtime with the two c-GC algorithms.
\end{itemize}

\section{Method}
\label{sec:method}

The two algorithms separate the RCCP decision rule and conditioning sets from
the dependence measure. Their marginal test uses Pearson correlation, and
their conditional test uses ridge-residual correlation. The precision,
residual-recovery, Schur, and complete-orbit derivations below define this
correlation-based formulation.

\subsection{Problem and lag-expanded representation}
\label{sec:notation}

Let \(X\in\R^{n\times T}\) contain \(n\) observed processes and \(T\)
consecutive samples, indexed from zero. In Equation~\eqref{eq:lag-expanded},
column slices include the lower endpoint and exclude the upper endpoint. Following the conditioning-depth notation of
\citet{adedayo2026conditioning}, \(p\) is both the number of represented past
states and the maximum candidate lag. Set \(L=T-p\), and form
the lag-expanded matrix
\begin{equation}
 Z =
 \begin{bmatrix}
 X_{\cdot,p:T}\\
 X_{\cdot,p-1:T-1}\\
 \vdots\\
 X_{\cdot,0:T-p}
 \end{bmatrix}
 \in\R^{m\times L},
 \qquad m=n(p+1).
\label{eq:lag-expanded}
\end{equation}
Row \(i=(a,\tau)\), where \(a=i\bmod n\) and
\(\tau=\lfloor i/n\rfloor\), represents process \(a\) at lag \(\tau\).
Targets are restricted to the first \(n\) rows, which represent the present
state.

Each row is centered and scaled once:
\begin{equation}
 \widetilde Z_{i,t}
 =\frac{Z_{i,t}-\overline Z_i}{s_i},
 \qquad
 s_i^2=\frac{1}{L}\sum_{t=1}^{L}
       (Z_{i,t}-\overline Z_i)^2.
\label{eq:standardize}
\end{equation}
A row with \(s_i\) below a floating-point tolerance is marked constant. Every
pair involving such a row receives score zero and \(p\)-value one. The use of
\(1/L\), rather than \(1/(L-1)\), only fixes the numerical scale; correlation
statistics are unchanged.

Table~\ref{tab:notation} summarizes the notation used below.

\begin{table}[t]
\centering
\caption{Core notation.}
\label{tab:notation}
\small
\begin{tabular}{@{}ll@{}}
\toprule
Symbol & Meaning\\
\midrule
\(n,T,p\) & variables, samples, and maximum lag\\
\(L=T-p\) & aligned sample length\\
\(m=n(p+1)\) & number of lag-expanded rows\\
\(i=(a,\tau)\) & process \(a\) at lag \(\tau\)\\
\(j=(b,0)\) & present-state target\\
\(\widetilde Z\) & centered, row-standardized expanded data\\
\(\lambda>0\) & dimensionless ridge on the correlation scale\\
\(\alpha,\beta\) & marginal and conditional decision thresholds\\
\(N_{\mathrm{eval}}\) & number of evaluated conditional pairs\\
\bottomrule
\end{tabular}
\end{table}

\subsection{Conditioning sets and the two-test rule}
\label{sec:two-test}

The definitions in this subsection state the RCCP decision rule and the two
conditioning sets in the notation used here. The rule and the distinction
between the \origcgc{} and \origcgcstar{} conditioning sets follow
\citet{adedayo2026}; the regularized estimators and computational path below
define \fastcgc{} and \fastcgcstar{}.

Let \(V=\{0,\ldots,m-1\}\) index the expanded rows. For a candidate
\(i=(a,\tau)\to j=(b,0)\), define the rows excluded before conditional
residualization as
\begin{align}
 F_{ij}^{\ast} &= \varnothing
 &&\text{for \fastcgcstar{},}
\label{eq:f-star}\\
 F_{ij} &=
 \{(a,\ell):0\leq \ell<\tau\}\setminus\{j\}
 &&\text{for \fastcgc{}.}
\label{eq:f-cgc}
\end{align}
The removal of \(j\) from \(F_{ij}\) is an endpoint ``add-back'' needed for
lagged self-edges. Put \(A_{ij}=V\setminus F_{ij}\) and
\(B_{ij}=A_{ij}\setminus\{i,j\}\). Thus \fastcgcstar{} adjusts for every
other expanded row. \fastcgc{} does not adjust for the later states of the
tested source.

The maximum lag \(p\) also determines the observed history available to the
conditional test. Before the pair-specific exclusions above, \(V\) contains
\(\{X_t,X_{t-1},\ldots,X_{t-p}\}\); when \(p>1\), the adjustment can therefore
include several past states. This distinction matters because omitted history
can leave residual dependence that appears as a direct edge. Conditioning-depth
diagnostics use graph changes under progressively deeper histories to assess
whether the observed state is adequate~\citep{adedayo2026conditioning}. In the
present algorithms, however, \(p\) controls both the oldest candidate lag and
the available conditioning history. This differs from the fixed-horizon
diagnostic in which the reported causal lag remains fixed while the
conditioning depth increases.

The marginal statistic is
\begin{equation}
 U_{ij}=
 \left|\operatorname{corr}(\widetilde Z_i,\widetilde Z_j)\right|.
\label{eq:unconditional-stat}
\end{equation}
For the conditional statistic, let \(e=(i,j)\), let
\(S=\widetilde Z\widetilde Z^\top/L\), and define the pair of ridge residuals
\begin{equation}
 R_e^{(ij)}
 =\widetilde Z_e
 -S_{eB}\bigl(S_{BB}+\lambda I\bigr)^{-1}\widetilde Z_B,
 \qquad B=B_{ij}.
\label{eq:ridge-residual}
\end{equation}
When \(B\) is empty, \(R_e^{(ij)}=\widetilde Z_e\). The final conditional
effect size for an evaluated candidate is
\begin{equation}
 C_{ij}=
 \left|\operatorname{corr}
 \left(R_i^{(ij)},R_j^{(ij)}\right)\right|.
\label{eq:conditional-stat}
\end{equation}
Both \(U_{ij}\) and \(C_{ij}\) are tested against circular shifts, as specified
in Section~\ref{sec:shift-test}.

Let \(p^U_{ij}\) and \(p^C_{ij}\) be the corresponding decision values. The
expanded edge mask is
\begin{equation}
 \widehat D_{ij}
 =
 \mathbf 1\{p^U_{ij}\leq\alpha\}
 \mathbf 1\{p^C_{ij}\leq\beta\}.
\label{eq:and-rule}
\end{equation}
Only \(p^U_{ij}\) and \(p^C_{ij}\) enter this decision. The precision quantity
\(q_{ij}^A\) and its analytic value \(p_{\mathrm{scr}}\), introduced below, are
recorded diagnostics and do not replace either shift-calibrated decision value.
Zero-lag self-pairs are excluded before testing; positive-lag autoregressive
edges remain eligible. For a requested set of lags \(\mathcal L\), the
aggregate adjacency and weight are
\begin{align}
 \widehat A_{ab}
 &=\max_{\tau\in\mathcal L}\widehat D_{(a,\tau),(b,0)},\\
 W_{ab}
 &=
 \begin{cases}
 \displaystyle
 \max_{\tau\in\mathcal L:\widehat D_{(a,\tau),(b,0)}=1}
 C_{(a,\tau),(b,0)}, & \widehat A_{ab}=1,\\
 0, & \widehat A_{ab}=0.
 \end{cases}
\label{eq:lag-aggregate}
\end{align}
The maximizing lag accompanies each aggregate edge.

\begin{proposition}[Intersection--union interpretation]
\label{prop:iut}
Suppose \(p^U\) and \(p^C\) are valid \(p\)-values for
\(H_0^U\) (marginal independence) and \(H_0^C\) (conditional independence).
For the composite null \(H_0=H_0^U\cup H_0^C\),
\begin{equation}
 p^{\mathrm{IUT}}=\max(p^U,p^C)
\label{eq:iut-p}
\end{equation}
is valid. More generally, the unequal-threshold rule
\(p^U\leq\alpha\) and \(p^C\leq\beta\) has size at most
\(\max(\alpha,\beta)\) over \(H_0\), not \(\alpha\beta\).
\end{proposition}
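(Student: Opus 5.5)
The plan is the standard intersection--union argument (Berger's IUT), carried out pointwise over the composite null. First I will fix what validity means here. A \(p\)-value \(p\) is valid for a null \(H\) if \(\Pr_P(p\leq u)\leq u\) for every \(u\in[0,1]\) and every law \(P\in H\). The composite null \(H_0=H_0^U\cup H_0^C\) consists of all laws under which at least one of the two nulls holds. So it suffices to fix an arbitrary \(P\in H_0\) and split into two cases: \(P\in H_0^U\) or \(P\in H_0^C\). If both hold, either case applies.

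\textbf{Validity of \(p^{\mathrm{IUT}}\).} Suppose \(P\in H_0^U\). The event inclusion
\[
\{\max(p^U,p^C)\leq u\}\subseteq\{p^U\leq u\}
\]
gives \(\Pr_P(p^{\mathrm{IUT}}\leq u)\leq\Pr_P(p^U\leq u)\leq u\), by validity of \(p^U\) under \(H_0^U\). The case \(P\in H_0^C\) is symmetric, using \(\{\max(p^U,p^C)\leq u\}\subseteq\{p^C\leq u\}\). Since \(P\) was arbitrary in \(H_0\), this proves validity. No assumption on the joint dependence between \(p^U\) and \(p^C\) is needed.

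\textbf{Unequal thresholds.} The same inclusions give the size bound for the rule \(\{p^U\leq\alpha,\ p^C\leq\beta\}\):
\[
\Pr_P(p^U\leq\alpha,\ p^C\leq\beta)\leq
\begin{cases}
\Pr_P(p^U\leq\alpha)\leq\alpha, & P\in H_0^U,\\
\Pr_P(p^C\leq\beta)\leq\beta, & P\in H_0^C.
\end{cases}
\]
Hence the supremum over \(H_0\) is at most \(\max(\alpha,\beta)\).

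\textbf{Why not \(\alpha\beta\).} This is the only point that needs more than a one-line inclusion. The bound \(\alpha\beta\) would require \(p^U\) and \(p^C\) to be independent and both nulls to hold at once. Neither is guaranteed over \(H_0\), since a law in \(H_0^U\setminus H_0^C\) is allowed. I will exhibit a law in \(H_0\) whose rejection probability exceeds \(\alpha\beta\). Take \(P\in H_0^U\setminus H_0^C\) with \(p^U\) exactly uniform, and let the conditional dependence be strong. Then \(\Pr_P(p^C\leq\beta)\to1\), and the rejection probability tends to \(\alpha\).

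\begin{itemize}
\item If \(\alpha\geq\beta\), this approaches \(\max(\alpha,\beta)\). A collider-type configuration realizes it, since there marginal independence coexists with conditional dependence.
\item If \(\beta>\alpha\), the symmetric construction applies: marginal dependence with conditional independence, as with a common cause that is adjusted for.
\end{itemize}

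So the bound \(\max(\alpha,\beta)\) is sharp in general, and the product \(\alpha\beta\) is not a valid size guarantee. I would present this as an existence remark rather than a full construction. The claim itself only needs the upper bound plus the observation that \(\alpha\beta\) can be exceeded.
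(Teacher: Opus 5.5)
Your argument is correct and matches the paper's proof. Under whichever component null holds, the rejection event is contained in \(\{p^U\leq u\}\) or \(\{p^C\leq u\}\) (respectively \(\{p^U\leq\alpha\}\) or \(\{p^C\leq\beta\}\)), and taking the worst case over \(H_0^U\cup H_0^C\) gives both the validity of \(\max(p^U,p^C)\) and the \(\max(\alpha,\beta)\) size bound. The paper proves only that upper bound. Your added sharpness remark, which uses a law in \(H_0^U\setminus H_0^C\) with uniform \(p^U\) and a highly powerful conditional test, goes beyond the paper and is a sound informal justification of the ``not \(\alpha\beta\)'' clause.
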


\begin{proof}
If \(H_0^U\) is true, then
\[
 \Pr\{\max(p^U,p^C)\leq u\}
 \leq \Pr\{p^U\leq u\}\leq u.
\]
The same argument applies when \(H_0^C\) is true. For unequal thresholds,
the rejection event is contained in \(\{p^U\leq\alpha\}\) under \(H_0^U\)
and in \(\{p^C\leq\beta\}\) under \(H_0^C\). Taking the worst case over the
union null yields the stated bound.
\end{proof}

This result establishes the logical form of the two-test rule. It does not
validate the component tests. Circular shift invariance and the linear
conditional-association model require separate assumptions below.

\subsection{Complete circular shift tests}
\label{sec:shift-test}

For centered \(x,y\in\R^L\), define the circular cross-correlation
\begin{equation}
 c_s=\sum_{t=0}^{L-1}x_t y_{(t-s)\bmod L},
 \qquad s=0,\ldots,L-1.
\label{eq:circular-correlation}
\end{equation}
The denominator \(\lVert x\rVert_2\lVert y\rVert_2\) does not depend on
\(s\). All shifted correlations can therefore be evaluated through real
Fourier transforms. The resulting decision value is
\begin{equation}
 p_{\mathrm{shift}}
 =\frac{
 1+\sum_{s=1}^{L-1}
 \mathbf 1\{|c_s|\geq |c_0|\}
 }{L}.
\label{eq:shift-p}
\end{equation}
The leading one in the numerator counts the identity shift. The reported value
therefore cannot be zero, consistent with the requirement that a randomization
\(p\)-value should not be reported as zero~\citep{phipson2010}.

\begin{theorem}[FFT evaluation of the circular orbit]
\label{thm:fft}
Let \(\mathcal F\) denote the length-\(L\) discrete Fourier transform. Up to a
possible reversal of the shift index,
\begin{equation}
 (c_0,\ldots,c_{L-1})
 =
 \mathcal F^{-1}\!
 \left(\mathcal F(x)\odot\overline{\mathcal F(y)}\right).
\label{eq:fft-orbit}
\end{equation}
Thus Equation~\eqref{eq:shift-p} equals brute-force enumeration of the complete
circular shift orbit up to floating-point error, and costs
\(\order(L\log L)\) per pair rather than \(\order(NL)\) for \(N\) sampled
shifts.
\end{theorem}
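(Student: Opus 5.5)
The plan is to derive the identity from the circular convolution theorem, fixing one DFT convention. Take \(\mathcal F(x)_k=\sum_{t=0}^{L-1}x_t\omega^{-kt}\) with \(\omega=e^{2\pi\mathrm{i}/L}\), and \(\mathcal F^{-1}(u)_s=L^{-1}\sum_k u_k\omega^{ks}\). Because \(y\) is real, \(\overline{\mathcal F(y)}_k=\sum_{r}y_r\omega^{kr}\). Multiply by \(\mathcal F(x)_k\), apply \(\mathcal F^{-1}\), and exchange the finite sums. This gives
\[
 L^{-1}\sum_{t,r}x_t y_r\sum_{k}\omega^{k(s-t+r)} .
\]
By the orthogonality relation \(\sum_k\omega^{kq}=L\,\mathbf 1\{q\equiv0 \bmod L\}\), only \(r\equiv t-s \pmod L\) survives. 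The result is therefore \(\sum_t x_t y_{(t-s)\bmod L}=c_s\), which is exactly Equation~\eqref{eq:circular-correlation}.

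The "reversal of the shift index" caveat covers other sign conventions. If the forward transform uses \(\omega^{+kt}\), or if the conjugate is placed on \(\mathcal F(x)\) instead, the same computation returns \(c_{(-s)\bmod L}\). Reindexing by \(s\mapsto(-s)\bmod L\) is a bijection of \(\{1,\ldots,L-1\}\) that fixes \(s=0\). Hence the multiset \(\{|c_s|:s\neq0\}\) and the value \(c_0\) are both unchanged under either convention. Because Equation~\eqref{eq:shift-p} depends only on these quantities, \(p_{\mathrm{shift}}\) is convention-invariant. The normalization \(\lVert x\rVert_2\lVert y\rVert_2\) is common to all shifts, as noted before the statement. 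It therefore cancels in the comparisons \(|c_s|\geq|c_0|\), so correlations and raw cross-products give the same indicator counts. Brute-force evaluation of all \(L\) shifted correlations computes the same numbers, so the two \(p\)-values agree in exact arithmetic.

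The complexity claim follows from standard FFT costs. Computing \(\mathcal F(x)\), \(\mathcal F(y)\), the pointwise product, the inverse transform, and the count in \eqref{eq:shift-p} takes \(\order(L\log L)+\order(L)\) operations. Real-input transforms halve constants but do not change the order. Evaluating \(N\) sampled shifts directly costs \(\order(L)\) each, hence \(\order(NL)\) in total.

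The main obstacle is the phrase "up to floating-point error". The FFT result differs from the exact \(c_s\) by a rounding perturbation of size \(\order(\varepsilon\log L)\lVert x\rVert\lVert y\rVert\). I would state equality of the \(p\)-values only in exact arithmetic. I would then note that the indicators \(\mathbf 1\{|c_s|\geq|c_0|\}\) can flip only at near-ties, where \(\bigl||c_s|-|c_0|\bigr|\) is within this error bound. This bounds the discrepancy by the number of such near-ties divided by \(L\). I would not attempt a sharper floating-point analysis.
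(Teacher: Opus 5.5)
Your derivation matches the paper's proof: you expand \(\mathcal F^{-1}(\mathcal F(x)\odot\overline{\mathcal F(y)})\) with the \(e^{-2\pi\mathrm{i}kt/L}\) forward convention, interchange the finite sums, and use orthogonality of the roots of unity to keep only \(r\equiv t-s\pmod L\). Your further remarks on reversal invariance of \(p_{\mathrm{shift}}\), cancellation of \(\lVert x\rVert_2\lVert y\rVert_2\), cost, and rounding spell out what the paper leaves implicit. One side remark needs correcting: flipping the exponent sign in a \emph{consistent} forward/inverse pair still returns \(c_s\), and only moving the conjugate onto \(\mathcal F(x)\) (or mismatching the pair) yields \(c_{(-s)\bmod L}\); your invariance argument covers either indexing, so the proof is unaffected.
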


Proof: Appendix~\ref{app:algebra}.

\begin{remark}[Statistical validity is not an FFT identity]
\label{rem:shift-validity}
Theorem~\ref{thm:fft} is computational. Equation~\eqref{eq:shift-p} is a valid
randomization test only when the null law is invariant under the cyclic shifts
being enumerated. For the marginal test, that law concerns the standardized
row pair; for the conditional test, it concerns the two estimated ridge
residual series. Stationarity of the raw process alone does not establish either
invariance. Exact cyclic exchangeability is stronger than ordinary stationarity,
and wrap-around can be problematic for trends or boundary transients.
\citet{harris2020} gives conservative shift-test results under explicit
stationarity conditions for generic autocorrelated series. In this paper,
``complete'' or ``exact'' refers to enumeration of the chosen orbit, not to
automatic validity for every time series.
\end{remark}

The test has resolution \(1/L\). A conditional threshold \(\beta\) is
attainable only if \(L\geq\lceil1/\beta\rceil\); this condition must be checked
for every fitted dataset before thresholded graphs are interpreted.

\subsection{Ridge precision diagnostics and residual recovery}
\label{sec:ridge}

For any retained set \(A\), define
\begin{equation}
 M_A=S_{AA}+\lambda I,
 \qquad
 P^A=M_A^{-1},
 \qquad
 q_{ij}^{A}
 =\frac{|P^A_{ij}|}{\sqrt{P^A_{ii}P^A_{jj}}}.
\label{eq:precision-diagnostic}
\end{equation}
The sign convention for partial correlation is immaterial because the score is
absolute. The positive ridge makes \(M_A\) invertible even if \(m>L\). Related
shrinkage estimators have long been used to stabilize high-dimensional
covariance inversion~\citep{ledoit2004}; here \(\lambda\) is a fixed
dimensionless parameter rather than the Ledoit--Wolf estimate. We use
\(\lambda=10^{-6}\) after standardization. This value is neither data-selected
nor statistically optimal, and its use outside the reported setting requires
a calibration or sensitivity check.

\begin{lemma}[Existence and vanishing-ridge perturbation]
\label{lem:ridge}
For \(\lambda>0\), \(S+\lambda I\) is positive definite and
\(\|(S+\lambda I)^{-1}\|_2\leq1/\lambda\). If \(S\) is positive definite with
smallest eigenvalue \(\mu_{\min}>0\), then
\begin{equation}
 \left\|(S+\lambda I)^{-1}-S^{-1}\right\|_2
 \leq
 \frac{\lambda}
 {\mu_{\min}(\mu_{\min}+\lambda)}.
\label{eq:ridge-bound}
\end{equation}
\end{lemma}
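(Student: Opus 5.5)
The plan is a spectral argument, using that \(S=\widetilde Z\widetilde Z^\top/L\) is a symmetric Gram matrix. First observe that \(S\) is positive semidefinite, since \(v^\top S v=\lVert \widetilde Z^\top v\rVert_2^2/L\geq 0\) for every \(v\in\R^m\). Diagonalize \(S=Q\Lambda Q^\top\) with \(Q\) orthogonal and eigenvalues \(\mu_1,\ldots,\mu_m\geq 0\). Then \(S+\lambda I=Q(\Lambda+\lambda I)Q^\top\) has eigenvalues \(\mu_k+\lambda\geq\lambda>0\), so it is positive definite and hence invertible, even when \(m>L\) and \(S\) is singular. Its inverse is \(Q(\Lambda+\lambda I)^{-1}Q^\top\). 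Because the inverse is symmetric, its spectral norm equals its largest eigenvalue, \(\max_k 1/(\mu_k+\lambda)\leq 1/\lambda\), which gives the first claim.

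For the perturbation bound, assume \(\mu_{\min}>0\). The same orthogonal matrix \(Q\) diagonalizes both \(S^{-1}\) and \((S+\lambda I)^{-1}\), so their difference is \(Q\,\mathrm{diag}(d_k)\,Q^\top\), where
\[
 d_k=\frac{1}{\mu_k+\lambda}-\frac{1}{\mu_k}
 =-\frac{\lambda}{\mu_k(\mu_k+\lambda)}.
\]
The spectral norm of this symmetric difference is \(\max_k|d_k|\). The function \(\mu\mapsto\lambda/(\mu(\mu+\lambda))\) is decreasing on \((0,\infty)\), so the maximum is attained at \(\mu_k=\mu_{\min}\). This yields Equation~\eqref{eq:ridge-bound}.

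As a cross-check that does not rely on commuting matrices, one can also use the resolvent identity \((S+\lambda I)^{-1}-S^{-1}=-\lambda (S+\lambda I)^{-1}S^{-1}\) together with submultiplicativity. This route only gives the looser bound \(\lambda/(\mu_{\min}(\mu_{\min}+\lambda))\) after noting \(\|(S+\lambda I)^{-1}\|_2=1/(\mu_{\min}+\lambda)\), which again uses the spectrum. The spectral route is therefore the primary one.

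There is no real obstacle here. The only points that need care are:
\begin{itemize}
 \item establishing positive semidefiniteness of \(S\) from its Gram form;
 \item noting that the spectral norm of a symmetric matrix equals its largest absolute eigenvalue;
 \item checking the monotonicity step that places the worst case at \(\mu_{\min}\).
\end{itemize}
The same argument applies verbatim to every principal submatrix \(S_{AA}\), since principal submatrices of a PSD matrix are PSD. This covers \(M_A\) in Equation~\eqref{eq:precision-diagnostic} and the \(S_{BB}+\lambda I\) appearing in Equation~\eqref{eq:ridge-residual}.
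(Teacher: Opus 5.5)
Your spectral argument is correct and is essentially the paper's proof: diagonalize \(S\), read off the eigenvalues \(1/(\mu_k+\lambda)\) of the inverse and \(\lambda/[\mu_k(\mu_k+\lambda)]\) for the magnitude of the difference, and maximize at \(\mu_{\min}\); your Gram-form argument also makes explicit the positive semidefiniteness of \(S\) that the paper's step \(v^\top(S+\lambda I)v>0\) uses implicitly. One small correction to your side remark: the resolvent route gives exactly the same bound \(\lambda/[\mu_{\min}(\mu_{\min}+\lambda)]\), not a looser one, and it becomes looser (namely \(1/\mu_{\min}\)) only if you substitute the cruder estimate \(\|(S+\lambda I)^{-1}\|_2\leq 1/\lambda\).
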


Proof: Appendix~\ref{app:algebra}.

Fixed ridge regularization is not invariant to arbitrary measurement units if
it is applied before scaling. Standardization resolves that problem for the
absolute statistics used here.

\begin{proposition}[Affine scale invariance]
\label{prop:scale}
In exact arithmetic, replace every nonconstant input process by
\(X'_a=c_aX_a+d_a\), where \(c_a\neq0\). Provided this transformation does not
change which rows are classified as constant, the marginal scores, regularized
precision diagnostics, ridge-residual scores, and complete-orbit \(p\)-values
are unchanged after Equation~\eqref{eq:standardize}.
\end{proposition}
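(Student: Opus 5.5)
The plan is to show that after Equation~\eqref{eq:standardize} every transformed row differs from the original standardized row by at most a sign. Every downstream quantity is then an absolute value that is insensitive to such sign flips.

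First, consider row \(i=(a,\tau)\) of \(Z'\). It is the lag-\(\tau\) slice of \(c_aX_a+d_a\), so \(Z'_{i,t}=c_aZ_{i,t}+d_a\). Hence \(\overline{Z'}_i=c_a\overline Z_i+d_a\) and \(s'_i=|c_a|s_i\). This gives
\[
\widetilde Z'_i=\sigma_a\widetilde Z_i,\qquad \sigma_a=\operatorname{sign}(c_a)\in\{\pm1\}.
\]
Writing \(D=\operatorname{diag}(\sigma_{a(i)})_{i\in V}\), we get \(\widetilde Z'=D\widetilde Z\) with \(D\) orthogonal and \(D^2=I\). The hypothesis that constant-row classification is unchanged means the same pairs are assigned score zero and \(p\)-value one, so only nonconstant rows need attention. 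In exact arithmetic, \(s'_i=0\) if and only if \(s_i=0\) anyway; the hypothesis only covers the floating-point tolerance.

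Second, propagate \(D\) through each statistic. The covariance transforms as \(S'=DSD\), and so \(S'_{AA}+\lambda I=D_A(S_{AA}+\lambda I)D_A\), because \(D_AD_A=I\). It follows that \(P'^A=D_AP^AD_A\), so \(P'^A_{ij}=\sigma_i\sigma_jP^A_{ij}\) and the diagonal entries are unchanged. Therefore \(q^A_{ij}\) in Equation~\eqref{eq:precision-diagnostic} is invariant. For the ridge residual in Equation~\eqref{eq:ridge-residual},
\[
R'^{(ij)}_e=D_e\widetilde Z_e-D_eS_{eB}D_B\,D_B(S_{BB}+\lambda I)^{-1}D_B\,D_B\widetilde Z_B=D_eR^{(ij)}_e.
\]
So each residual changes at most by a sign, and \(C_{ij}\) is invariant because it is an absolute correlation. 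The same reasoning applies directly to \(U_{ij}\). The sets \(F_{ij},A_{ij},B_{ij}\) depend only on indices, so they do not change.

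Third, the shift \(p\)-values. For any pair \((x,y)\) entering Equation~\eqref{eq:circular-correlation}, whether standardized rows or residual pairs, the transformed pair is \((\sigma x,\sigma' y)\). Every \(c_s\) is therefore multiplied by the same factor \(\sigma\sigma'\), and each comparison \(|c_s|\ge|c_0|\) is unchanged. Equation~\eqref{eq:shift-p} thus gives the same value, and by Theorem~\ref{thm:fft} so does its FFT evaluation. Finally, the \(\max\)-based aggregation in Equation~\eqref{eq:lag-aggregate} inherits this invariance.

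The only step needing care is the residual calculation. One has to check that the sign matrix cancels inside the ridge inverse even though \(\lambda I\) is added. This works because \(D_B\lambda ID_B=\lambda I\), which is exactly the point where invariance would fail without prior standardization, since general scalings \(c_a\) do not commute with \(\lambda I\).
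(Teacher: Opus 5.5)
Your proposal is correct and follows the paper's own argument. Both write \(\widetilde Z'=D\widetilde Z\) with \(D\) the diagonal matrix of \(\operatorname{sign}(c_a)\) repeated across lag blocks, propagate it as \(S'=DSD\), \(P'=DPD\) and \(R'_e=D_eR_e\), and conclude because every score and every shift comparison is an absolute value; you simply make explicit the steps the paper leaves implicit (\(s'_i=|c_a|s_i\), the common factor \(\sigma\sigma'\) on each \(c_s\), and the role of \(D_B(\lambda I)D_B=\lambda I\)).
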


Proof: Appendix~\ref{app:algebra}.

The qualification about constant-row classification matters in finite
precision: extreme offsets or scales can cross a magnitude-dependent tolerance
even when the exact-arithmetic variance is nonzero.

The optional Student-\(t\) transform of \(q_{ij}^A\) is described in
Appendix~\ref{app:diagnostics}. It does not select candidates or replace the
circular shift decision values.

The next identity recovers the ridge residual time series without fitting each
regression separately.

\begin{theorem}[Ridge-residual block identity]
\label{thm:residual}
Partition a retained set \(A\) into tested endpoints \(e=(i,j)\) and controls
\(B=A\setminus e\). Let
\(P^A=(S_{AA}+\lambda I)^{-1}\). The residuals in
Equation~\eqref{eq:ridge-residual} satisfy
\begin{equation}
 R_e
 =
 \bigl(P^A_{ee}\bigr)^{-1}
 \bigl(P^A\widetilde Z_A\bigr)_e.
\label{eq:residual-identity}
\end{equation}
\end{theorem}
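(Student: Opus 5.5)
The identity follows from the block inverse of the regularized Gram matrix on \(A\), written in the order \((e,B)\). The one subtlety is that the ridge acts on \(M_A=S_{AA}+\lambda I\), whereas the residual in Equation~\eqref{eq:ridge-residual} subtracts a fitted value built from \(\widetilde Z_B\) only. So the ridge enters only through the control block \(S_{BB}+\lambda I\) and through the coefficient \(S_{eB}\).

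Write
\[
M_A=\begin{bmatrix} M_{ee} & M_{eB}\\ M_{Be} & M_{BB}\end{bmatrix},
\qquad
M_{eB}=S_{eB},\quad M_{BB}=S_{BB}+\lambda I,
\]
where the off-diagonal blocks carry no ridge because \(\lambda I\) is diagonal. By Lemma~\ref{lem:ridge}, applied to the principal submatrices \(S_{AA}\) and \(S_{BB}\) (both positive semidefinite), \(M_A\) and \(M_{BB}\) are positive definite. The Schur complement
\[
K=M_{ee}-M_{eB}M_{BB}^{-1}M_{Be}
\]
is therefore positive definite. The standard block-inversion formula then gives
\[
P^A_{ee}=K^{-1},
\qquad
P^A_{eB}=-K^{-1}M_{eB}M_{BB}^{-1}.
\]
In particular \(P^A_{ee}\) is an invertible \(2\times2\) block.

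Next I compute the \(e\)-rows of \(P^A\widetilde Z_A\):
\[
\bigl(P^A\widetilde Z_A\bigr)_e
=P^A_{ee}\widetilde Z_e+P^A_{eB}\widetilde Z_B
=K^{-1}\bigl(\widetilde Z_e-S_{eB}(S_{BB}+\lambda I)^{-1}\widetilde Z_B\bigr)
=K^{-1}R_e.
\]
Left-multiplying by \((P^A_{ee})^{-1}=K\) yields Equation~\eqref{eq:residual-identity}.

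The empty-\(B\) case is consistent with this. Then \(P^A=M_{ee}^{-1}\), and the right-hand side of Equation~\eqref{eq:residual-identity} reduces to \(\widetilde Z_e\), matching the convention stated after Equation~\eqref{eq:ridge-residual}.

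The main point to check carefully is the one flagged at the start. The ridge on \(M_{ee}\) changes \(K\) and hence the scaling matrix \(P^A_{ee}\). It cancels, however, in the product \((P^A_{ee})^{-1}P^A_{eB}=-M_{eB}M_{BB}^{-1}\), which contains only the off-diagonal block and the ridged control block. This cancellation is why the identity is exact for every \(\lambda>0\) and is not merely a vanishing-ridge limit.
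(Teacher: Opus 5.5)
Your proof is correct and follows the paper's argument. Block inversion gives \(P^A_{eB}=-P^A_{ee}S_{eB}(S_{BB}+\lambda I)^{-1}\), so the \(e\)-rows of \(P^A\widetilde Z_A\) equal \(P^A_{ee}R_e\), and invertibility of the positive definite block \(P^A_{ee}\) finishes it; your explicit Schur complement \(K\) and the empty-\(B\) check only spell out the same steps.
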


\begin{proof}
Block inversion gives
\[
 P^A_{eB}
 =-P^A_{ee}S_{eB}(S_{BB}+\lambda I)^{-1}.
\]
It follows that
\[
\begin{aligned}
 (P^A\widetilde Z_A)_e
 &=P^A_{ee}\widetilde Z_e+P^A_{eB}\widetilde Z_B\\
 &=P^A_{ee}\left[
 \widetilde Z_e-S_{eB}(S_{BB}+\lambda I)^{-1}\widetilde Z_B
 \right]\\
 &=P^A_{ee}R_e.
\end{aligned}
\]
The principal block \(P^A_{ee}\) is positive definite and therefore
invertible.
\end{proof}

For \(\lambda=0\), the correlation of ordinary least-squares residuals equals
the familiar precision partial correlation. For \(\lambda>0\), the empirical
covariance of \(R_e\) is generally not \((P^A_{ee})^{-1}\). Consequently
\(q_{ij}^{A}\) and \(C_{ij}\) need not coincide. Both fast algorithms retain
the former as a diagnostic and use the latter as the conditional statistic.
For a multivariate Gaussian law, zero ordinary partial correlation is
equivalent to conditional independence, but this equivalence does not hold for
general distributions~\citep{baba2004}. A positive ridge also breaks the exact
Gaussian bridge, as the following example shows.

\begin{remark}[Fixed ridge is not an exact conditional-independence criterion]
\label{rem:ridge-ci}
Let \(Z,\epsilon_X,\epsilon_Y\) be independent standard Gaussian variables and
set \(X=Z+\epsilon_X\) and \(Y=Z+\epsilon_Y\). Then
\(X\indep Y\mid Z\). After standardizing \(X\) and \(Y\), ridge residualization
on \(Z\) gives
\[
 R_X=\frac{\epsilon_X}{\sqrt{2}}
     +\frac{\lambda Z}{\sqrt{2}(1+\lambda)},
 \qquad
 R_Y=\frac{\epsilon_Y}{\sqrt{2}}
     +\frac{\lambda Z}{\sqrt{2}(1+\lambda)}.
\]
Hence
\[
 \operatorname{corr}(R_X,R_Y)
 =\frac{\lambda^2}{(1+\lambda)^2+\lambda^2}>0
 \qquad(\lambda>0).
\]
Thus linear Gaussianity alone does not make the implemented fixed-ridge
statistic an exact criterion for conditional independence. A small ridge can be
a useful numerical approximation when the unregularized covariance is
well-conditioned, as bounded by Lemma~\ref{lem:ridge}; that lemma does not cover
a singular covariance. An estimator-consistency result would require further
conditions, such as a vanishing ridge sequence and control of estimation and
randomization error. No such result is claimed here.
\end{remark}

\subsection{One global precision and \fastcgc{} Schur downdates}
\label{sec:schur}

\fastcgcstar{} has \(F_{ij}^{\ast}=\varnothing\), so it uses one global
precision matrix \(Q=(S+\lambda I)^{-1}\) for every pair. The \fastcgc{}
exclusions vary with source and lag, but their precision matrices can still be
computed from the same \(Q\).

\begin{theorem}[Precision after deleting a principal set]
\label{thm:schur}
Let \(M=S+\lambda I\), \(Q=M^{-1}\), and partition the indices into retained
rows \(A\) and excluded rows \(F\). Then
\begin{equation}
 \boxed{
 M_{AA}^{-1}
 =Q_{AA}-Q_{AF}Q_{FF}^{-1}Q_{FA}.
 }
\label{eq:schur-downdate}
\end{equation}
\end{theorem}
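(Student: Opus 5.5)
The identity follows from applying the block-inversion formula to \(M\) written in the \((A,F)\) partition and then solving for \(M_{AA}^{-1}\). After a symmetric permutation, which leaves the blocks \(Q_{AA},Q_{AF},Q_{FF}\) unchanged up to relabelling, write
\[
 M=\begin{bmatrix} M_{AA} & M_{AF}\\ M_{FA} & M_{FF}\end{bmatrix},
 \qquad
 Q=M^{-1}=\begin{bmatrix} Q_{AA} & Q_{AF}\\ Q_{FA} & Q_{FF}\end{bmatrix}.
\]
Lemma~\ref{lem:ridge} gives \(M\succ0\) for \(\lambda>0\). Hence \(Q\succ0\), every principal block of \(M\) and of \(Q\) is positive definite, and in particular \(M_{AA}\), \(M_{FF}\) and \(Q_{FF}\) are invertible. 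If \(F=\varnothing\) the identity is trivial, so assume \(F\) is nonempty.

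The key step reads the relations off \(QM=I\) or \(MQ=I\) blockwise, avoiding the closed-form Schur inverse. From the \((A,A)\) and \((F,A)\) blocks of \(QM=I\) we get
\[
 Q_{AA}M_{AA}+Q_{AF}M_{FA}=I,
 \qquad
 Q_{FA}M_{AA}+Q_{FF}M_{FA}=0.
\]
The second relation gives \(M_{FA}=-Q_{FF}^{-1}Q_{FA}M_{AA}\). Substituting into the first yields
\[
 \bigl(Q_{AA}-Q_{AF}Q_{FF}^{-1}Q_{FA}\bigr)M_{AA}=I.
\]
Since \(M_{AA}\) is square and invertible, this left inverse is the inverse, which proves Equation~\eqref{eq:schur-downdate}. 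As a cross-check, the same conclusion follows from applying the standard block-inverse formula to \(Q\) itself: the \((A,A)\) block of \(Q^{-1}=M\) is not what we need, but the Schur complement of \(Q_{FF}\) in \(Q\) equals \((M_{AA})^{-1}\). This is the usual duality \((M^{-1})/Q_{FF}=(M_{AA})^{-1}\).

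The only point needing care is invertibility of \(Q_{FF}\), together with the bookkeeping that \(A\) and \(F\) need not be contiguous index sets. Positive definiteness of \(Q\), inherited from the ridge via Lemma~\ref{lem:ridge}, settles the first. A permutation-matrix argument settles the second: since \(P^\top Q P=(P^\top MP)^{-1}\), the principal blocks are preserved. I expect no genuine obstacle beyond stating these two facts explicitly.
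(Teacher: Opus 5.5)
Your proof is correct and follows essentially the paper's route. The paper cites the block-inverse formula to identify the \((A,A)\) block of \(Q^{-1}=M\) as \(\bigl(Q_{AA}-Q_{AF}Q_{FF}^{-1}Q_{FA}\bigr)^{-1}\) and then inverts, whereas you derive the same Schur-complement identity directly from the \((A,A)\) and \((F,A)\) blocks of \(QM=I\), with positive definiteness of \(Q_{FF}\) supplying the needed invertibility; note that your cross-check remark calling that \((A,A)\) block of \(Q^{-1}=M\) ``not what we need'' is misleading, since it is exactly what the paper reads off before inverting.
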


\begin{proof}
Write \(Q\) in \(A,F\) block form. The block-inverse formula says that the
\(AA\) block of \(Q^{-1}=M\) is
\[
 M_{AA}
 =
 \left(Q_{AA}-Q_{AF}Q_{FF}^{-1}Q_{FA}\right)^{-1}.
\]
All principal blocks are positive definite because \(M\) and \(Q\) are
positive definite. Inverting both sides yields
Equation~\eqref{eq:schur-downdate}.
\end{proof}

For a source \(a\) at lag \(\tau\), ordinary targets share the exclusion
signature \(\{(a,0),\ldots,(a,\tau-1)\}\); the lagged self-target uses the
same signature with \((a,0)\) added back. The algorithm caches each
dense downdated precision by signature. There are \(\order(np)\) signatures
(at most \(1+n(2p-1)\) for \(p\geq1\) with the stated self-edge policy), and every
downdate solves a system of size at most \(p\). Residuals for the evaluated
pairs are then recovered in batches with Theorem~\ref{thm:residual}.

\subsection{Algorithm}
\label{sec:algorithm}

Both fast algorithms perform the following steps.

\begin{enumerate}
  \item Construct \(Z\), standardize valid rows, and build the allowed-pair
  mask once.
  \item Compute all marginal scores and complete circular shift values by one
  row-wise real FFT followed by one batched inverse transform per target.
  \item Form \(S+\lambda I\) and obtain its symmetric precision through
  Cholesky solves.
  \item For \fastcgcstar{}, use the global precision. For \fastcgc{},
  compute each required precision from Equation~\eqref{eq:schur-downdate}.
  \item Recover ridge residuals for every evaluated pair in
  bounded batches, then record \(C_{ij}\) and \(p_{\mathrm{shift}}\) as its
  conditional score and decision value.
  \item Apply Equation~\eqref{eq:and-rule}, aggregate requested lags, and
  weight each retained edge by the strongest retained conditional score.
\end{enumerate}

Every allowed pair receives both dependence evaluations. Optional diagnostics
are described in Appendix~\ref{app:diagnostics}. The graph-recovery
comparison uses this complete evaluation throughout.

\subsection{Computational complexity}
\label{sec:complexity}

Let \(N\) be the baseline's sampled-shift count, let
\(N_{\mathrm{eval}}\leq mn\) be the number of evaluated conditional pairs, and
let \(\mathcal F\) contain the distinct \fastcgc{} exclusion signatures. For
\(F\in\mathcal F\),
\(|F|\leq p\).
Write
\[
 H_{\mathrm{Schur}}
 =\sum_{F\in\mathcal F}\left(m^2|F|+|F|^3\right),
 \qquad K=m+\log L.
\]
Table~\ref{tab:complexity} lists leading dense costs.

\begin{table*}[t]
\centering
\caption{Leading time complexities. The fast algorithms evaluate a complete
shift orbit, whereas the candidate-wise c-GC calculation samples \(N\) shifts.}
\label{tab:complexity}
\small
\begin{tabular}{@{}p{0.20\textwidth}p{0.32\textwidth}p{0.39\textwidth}@{}}
\toprule
Pass & Candidate-wise c-GC & Fast algorithms\\
\midrule
Marginal
&
\(\order(m^2L+mnNL)\)
&
\(\order(nmL\log L)\)
\\[3pt]
Conditional \fastcgcstar{}
&
\(\order(mn(Lm^2+m^3+NL))\)
&
\(\order(m^2L+m^3+N_{\mathrm{eval}}L\log L)\)
\\[3pt]
Conditional \fastcgc{}
&
\(\order(mn(Lm^2+m^3+NL))\)
&
\(\order\!\left(
m^2L+m^3+H_{\mathrm{Schur}}
+N_{\mathrm{eval}}LK\right)\)
\\
\bottomrule
\end{tabular}
\end{table*}

The marginal operation count decreases by approximately \(N/\log L\) when the
shift loop dominates. The \fastcgcstar{} path replaces \(mn\) least-squares problems by
one covariance factorization and batched \(2\times2\) solves. The
\fastcgc{} Schur formula removes repeated \(m\times m\) inversions, although
forming residual rows still costs \(\order(N_{\mathrm{eval}}mL)\), where
\(N_{\mathrm{eval}}\) is the number of allowed pairs.

Dense covariance storage remains \(\order(m^2)\). The \fastcgc{}
cache can store \(\order(|\mathcal F|m^2)\) numbers because it materializes a
dense downdated precision for every signature. Selected-entry Schur updates
would reduce this memory cost. When \(m\gg L\), a Woodbury formulation could
replace the \(m\times m\) factorization by an \(L\times L\) solve.

\subsection{Limits of the causal interpretation}
\label{sec:causal}

The algebra above does not turn association into causation. The next result is
stated for population conditional independence, not for the finite-sample
ridge statistic.

\begin{assumption}[Causal Markov condition]
\label{ass:causal-markov}
The joint law of the represented variables is Markov to a time-unrolled DAG
whose vertex set is the expanded window.
\end{assumption}

\begin{assumption}[Causal sufficiency]
\label{ass:causal-sufficiency}
The represented variables are causally sufficient within the expanded window.
In particular, an unrepresented ancestor outside the window does not act as a
common cause of represented nodes.
\end{assumption}

\begin{assumption}[Faithfulness]
\label{ass:faithfulness}
The joint law of the represented variables is faithful to the time-unrolled
DAG.
\end{assumption}

\begin{assumption}[Temporal scope]
\label{ass:temporal-scope}
The maximum causal lag is no larger than \(p\), and every edge has strictly
positive lag.
\end{assumption}

\begin{assumption}[Present-state sink]
\label{ass:present-sink}
The present target has no descendants in the expanded window.
\end{assumption}

These assumptions refine those stated for causalised-GC in the source
paper~\citep{adedayo2026}. \citet{pearl2009} provides standard definitions
of the causal Markov condition, faithfulness, and graphical separation. The
shift-invariance requirement in Remark~\ref{rem:shift-validity} is a separate
condition on the component tests.

\begin{theorem}[\fastcgcstar{} adjacency for a present sink at the population level]
\label{thm:adjacency}
Under Assumptions~\ref{ass:causal-markov}--\ref{ass:present-sink}, let
\(U=X^a_{t-\tau}\) with \(\tau\geq1\), and let \(Y=X^b_t\) be a
present-state target distinct from \(U\). Then \(U\) and \(Y\) are adjacent in
the observed time-unrolled DAG if and only if
\begin{equation}
 U\notindep Y\mid V\setminus\{U,Y\}.
\label{eq:all-other-adjacency}
\end{equation}
If they are adjacent, faithfulness also implies \(U\notindep Y\) marginally.
Thus, when the marginal and conditional independence relations are known
exactly, both \fastcgcstar{} tests retain these positive-lag adjacencies.
\end{theorem}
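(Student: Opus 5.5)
The argument reduces the claim to a d-separation statement and uses Assumption~\ref{ass:present-sink} to rule out collider paths. By Assumptions~\ref{ass:causal-markov} and~\ref{ass:faithfulness}, for disjoint sets of represented variables, conditional independence coincides with d-separation in the time-unrolled DAG \(G\) on \(V\). Causal sufficiency (Assumption~\ref{ass:causal-sufficiency}) means we may work with \(G\) itself rather than a latent projection. It therefore suffices to show that \(U\) and \(Y\) are adjacent in \(G\) if and only if they are d-connected given \(W=V\setminus\{U,Y\}\).

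\emph{Adjacent implies dependent.} If \(U\) and \(Y\) are adjacent, the single-edge path has no interior vertices. Its d-connection is therefore unaffected by any conditioning set, in particular by \(W\) and by \(\varnothing\). Faithfulness then gives both \(U\notindep Y\mid W\) and \(U\notindep Y\). The second statement is the marginal claim in the theorem.

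\emph{Nonadjacent implies independent.} Suppose \(U\) and \(Y\) are not adjacent. I will show that every path \(\pi\) between them is blocked by \(W\). Conditioning on all other vertices makes each non-collider interior vertex block the path, so an active \(\pi\) would have to consist entirely of colliders. A path whose interior vertices are all colliders has length at most two: two consecutive colliders would require an edge pointing into both of its endpoints, which is impossible. Since \(U\) and \(Y\) are nonadjacent, \(\pi\) would thus be \(U\to C\leftarrow Y\) with \(C\in W\). But then \(C\) is a child of \(Y\) in the expanded window, contradicting Assumption~\ref{ass:present-sink}. Hence every path is blocked, and Markovness gives \(U\indep Y\mid W\).

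I expect this collider case to be the only real obstacle. The argument must carefully use that \(W\) is the full complement, so descendants of colliders are irrelevant, and that the sink assumption excludes children of \(Y\) inside the window. Assumption~\ref{ass:temporal-scope} enters here to guarantee there are no contemporaneous edges out of \(Y\). It also ensures that any direct influence on \(Y\) from \(X^a\) at lags up to \(p\) appears as an edge within \(V\), so the adjacency refers to a genuine causal edge. Finally, I would note that the \fastcgcstar{} conditional test uses exactly \(B_{ij}=V\setminus\{i,j\}\) because \(F^\ast_{ij}=\varnothing\). Under exactly known relations, both the marginal and conditional components of Equation~\eqref{eq:and-rule} therefore accept the edge. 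This is a population statement only, independent of the ridge caveat in Remark~\ref{rem:ridge-ci}.
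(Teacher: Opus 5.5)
Your proof is correct and follows essentially the same argument as the paper. The direct edge cannot be blocked, so faithfulness gives both dependences; for a nonadjacent pair, conditioning on the full complement leaves only all-collider paths, which reduce to $U\to C\leftarrow Y$ and are excluded by the present-sink assumption. Your appeal to Assumption~\ref{ass:temporal-scope} in the collider step is redundant, because Assumption~\ref{ass:present-sink} alone already rules out any child of $Y$ in the window.
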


\begin{proof}
A direct edge \(U\to Y\) is an active path with no internal nodes, so
conditioning on nodes other than its endpoints cannot block it. Faithfulness
then gives conditional and marginal dependence.

Now suppose \(U\) and \(Y\) are not adjacent and condition on every other
observed node. Any path with a noncollider among its internal nodes is blocked,
because that noncollider is conditioned on. An active non-direct path would
therefore need every internal node to be a collider. In a DAG, two consecutive
internal nodes cannot both be colliders on the edge joining them. Hence the
only possible such path has length two and the form
\(U\to K\leftarrow Y\). It would require \(K\) to be a child of \(Y\).
The present target is a sink in the observed window, so no such \(K\) exists.
All non-direct paths are blocked, which proves conditional independence for
nonadjacent pairs by the causal Markov condition.
\end{proof}

This result explains why conditioning on later source states can preserve
adjacency for a present-time sink under Assumptions~\ref{ass:causal-markov}--\ref{ass:present-sink}.
It assumes an oracle for population independence. The implemented estimator
instead uses Pearson and fixed-ridge residual correlations; its consistency
does not follow from the theorem. Remarks~\ref{rem:ridge-ci} and
\ref{rem:shift-validity} give the separate limitations of the residual
criterion and shift reference distribution.

The \fastcgc{} conditioning set does not yield a similar
direct-adjacency theorem. By excluding later states of the source,
\fastcgc{} can leave a directed mediated path open. Its output should
therefore be described as a lagged predictive or ancestral graph unless a
separate argument establishes separation for the data-generating class.
Finally, graph recovery tests roughly \(n^2p\) hypotheses. Fixed edgewise
thresholds do not control a graph-wide error rate; dependency-robust
false-discovery procedures are available~\citep{benjamini2001}, but neither
linear variant applies them.

\section{Experiments}
\label{sec:experiments}

\subsection{Study design and validation}
\label{sec:study-design}

\subsubsection{Evaluation design}
\label{sec:evaluation-design}

The comparison includes \origcgc{} and \origcgcstar{}~\citep{adedayo2026}, PCMCI+
\citep{runge2020}, Bagged PCMCI+~\citep{debeire2024}, CEDAR
\citep{fesanghary2026}, and CUTS+~\citep{cheng2024}. The two c-GC algorithms
are direct baselines for their fast counterparts; identical results are not
assumed.

PCMCI+ supplies a lag-resolved constraint-based comparison designed for
autocorrelated time series. Its reported configuration uses partial correlation and positive
lags. Bagged PCMCI+ uses 200 bootstrap resamples to assess whether aggregation
improves that base procedure in short, high-dimensional, or strongly
autocorrelated samples. CEDAR retains at most one selected cross-lag for an
ordered variable pair, which limits the
exact-lag comparison when a pair is linked at several lags. CUTS+ adds a neural
predictive method designed for large and irregularly sampled systems. It
returns an aggregate variable graph and is therefore evaluated only on the
any-lag estimand.

Every method receives the same observation matrix for a given dataset. The
primary graph contains directed positive-lag cross-variable
edges. Lag-resolved results use a common lag--source--target convention, and
aggregate any-lag recovery is reported separately. For \origcgc{} and
\origcgcstar{}, applying the recorded thresholds to the lag-specific marginal
and conditional \(p\)-values and taking the logical OR across lags reproduced
the aggregate graphs. Both algorithms can therefore enter the exact-lag
comparison. CEDAR's final selection is binary, whereas its available
\(p\)-values describe preliminary stages before MCI pruning. Those values do
not rank the final candidate edges. CEDAR consequently contributes binary
exact-lag and any-lag metrics, but not AUROC or AUPRC.

The evaluation excludes self-edges and contemporaneous edges. With \(n\)
variables and \(p\) candidate positive lags, the exact-lag candidate set has
\(n(n-1)p\) entries and the aggregate set has \(n(n-1)\) ordered pairs.
The aggregate binary graph contains an edge if any eligible lag is selected.
For the four RCCP algorithms, the lag-specific ranking score is
\(1-\max(p^U,p^C)\); this score is distinct from the retained-edge effect
size in Equation~\eqref{eq:lag-aggregate}.

Area under the precision--recall curve (AUPRC) is the primary ranking metric;
edge prevalence provides a reference baseline. Area under the receiver
operating characteristic curve (AUROC) is secondary. The main recovery tables
report AUPRC and F1; the null analysis reports cross-edge false-positive rates
and the probability of at least one false edge. Ranking summaries contain the
seven methods with continuous final-edge scores, and binary summaries contain
all eight methods. The cross-link-free SCM cell is excluded from AUPRC
summaries and included in F1 summaries.

The seed is the unit of replication in the simulated suites; the five DREAM3
networks are distinct systems. Paired aggregate-AUPRC contrasts use two-sided
Wilcoxon signed-rank tests and paired 10,000-draw bootstrap intervals, with
Holm correction over eight contrasts within each suite. The eight comparisons
pair each fast method with its corresponding c-GC baseline, PCMCI+, Bagged
PCMCI+, and CUTS+. Individual edges are not treated as independent replicates.
Detailed numerical comparisons are retained in Appendix~\ref{app:comparisons}.

\begin{table*}[t]
\centering
\caption{Method characteristics and settings specified in this study.
The common scoring mask excludes self-edges and contemporaneous edges.
Dataset-specific maximum lags are described in the corresponding subsections.}
\label{tab:method-settings}
\footnotesize
\begin{tabular}{@{}p{0.24\textwidth}p{0.33\textwidth}p{0.35\textwidth}@{}}
\toprule
Methods & Dependence calculation or setting & Evaluated output\\
\midrule
\origcgc{}, \origcgcstar{} & Two-test rule; \(\alpha=0.01\), \(\beta=0.005\)
& Exact-lag and aggregate scores and binary graphs\\
\fastcgc{}, \fastcgcstar{} & Pearson/ridge-residual correlation; \(\lambda=10^{-6}\);
complete shifts; same thresholds & Exact-lag and aggregate scores and binary graphs\\
PCMCI+ & Partial correlation; positive candidate lags & Exact-lag and aggregate scores and binary graphs\\
Bagged PCMCI+ & 200 bootstrap resamples; directed-edge frequency & Exact-lag and aggregate scores and binary graphs\\
CEDAR & At most one selected cross-lag per ordered pair & Binary exact-lag and aggregate graphs\\
CUTS+ & Neural predictive method & Aggregate scores and binary graphs\\
\bottomrule
\end{tabular}
\end{table*}

The validation study keeps threshold calibration, ridge sensitivity, and
maximum-lag sensitivity separate. Null calibration uses independent i.i.d.,
stationary autoregressive, and deliberately drifting series; the drift arm is
an assumption-violation stress test. Ridge sensitivity crosses the stationary
null with a sparse-VAR alternative, and lag sensitivity fits orders below, at,
and above the generating order. Each cell contains 50 independent seeds. The
marginal and conditional \(p\)-values are retained separately so that the
declared \(\beta\) grid can be scored without refitting. The grid contains
2,000 fits. Validation intervals use 2,000 percentile-bootstrap resamples of
the independent seeds; edges are not treated as replicates.

Two properties of the recorded scores require explicit handling. First,
Equation~\eqref{eq:shift-p} reports \(\widehat p=(b+1)/L\), where \(b\)
counts nonidentity shifts with an absolute statistic at least as large as the
observed value and \(L=T-p\) is the aligned sample length. Here \(L\) is
not an autocorrelation-adjusted effective sample size. Since \(L<1000\)
for every controlled or in-silico series, the original conditional threshold
\(\beta=0.001\) cannot retain a fast-method edge. Based on the empirical
null study, we selected \((\alpha,\beta)=(0.01,0.005)\) before rescoring the
benchmark suites. The tables use this rule for all four RCCP algorithms;
changing these thresholds does not change their stored ranking scores.

Second, Bagged PCMCI+ reports the frequency of the modal graph mark, which need
not be an edge. The observed positive-lag marks were limited to a directed edge
and no edge. Directed-edge frequency is therefore the modal frequency when the
edge mark is modal and its complement when no edge is modal.

\subsubsection{Threshold calibration and sensitivity}
\label{sec:calibration-sensitivity}

Table~\ref{tab:null-calibration} reports empirical null rejection rates at
\(\alpha=0.01\) and the selected \(\beta=0.005\). Under the i.i.d. null, the mean cross-edge
false-positive rate was 0.0035--0.0045 for both algorithms across the three
series lengths; the upper endpoints of the seed-bootstrap 95\% intervals were
at most 0.0075. The stationary autoregressive null was more conservative:
0.0010--0.0015 for \fastcgc{} and 0--0.0005 for \fastcgcstar{}. The drifting
series do not satisfy the shift-invariance condition. In that stress test,
\fastcgc{} reached 0.0095 at 1000 observations (95\% interval
\([0.0055,0.0140]\)), whereas \fastcgcstar{} reached 0.0020. These are empirical rates for the specified settings. In particular, an
upper interval endpoint of 0.0075 does not establish a 0.005 rejection bound.
The study does not establish validity of the conditional test under all
configurations in the composite null, including marginally dependent pairs
that become independent after conditioning. Proposition~\ref{prop:iut}
would give a bound of \(\max(0.01,0.005)=0.01\) only if both component
values were valid \(p\)-values.

\begin{table*}[t]
\centering
\caption{Empirical null rejection at \(\alpha=0.01\) and \(\beta=0.005\). Entries are
mean cross-edge false-positive rates over 50 independent seeds at each series
length. The last column gives the range of graph-level probabilities of at
least one false edge over the three lengths; no graph-wide correction was
applied.}
\label{tab:null-calibration}
\footnotesize
\begin{tabular}{@{}llcccc@{}}
\toprule
Method & Null regime & \(T=250\) & \(T=500\) & \(T=1000\) & Graph error range\\
\midrule
\multirow{3}{*}{\fastcgc{}}
 & i.i.d. & .0045 & .0035 & .0040 & .14--.18\\
 & stationary AR & .0015 & .0010 & .0015 & .04--.06\\
 & drifting stress test & .0030 & .0040 & .0095 & .12--.30\\
\midrule
\multirow{3}{*}{\fastcgcstar{}}
 & i.i.d. & .0045 & .0035 & .0040 & .14--.18\\
 & stationary AR & .0005 & .0000 & .0000 & .00--.02\\
 & drifting stress test & .0000 & .0005 & .0020 & .00--.08\\
\bottomrule
\end{tabular}
\end{table*}

The nonzero graph-level error in Table~\ref{tab:null-calibration} is expected
when many edges are tested without multiplicity correction. Thus the selected
threshold controls neither familywise error nor false discovery rate. Among the three examined thresholds, \(\beta=0.001\) was unattainable,
\(\beta=0.005\) produced the rates in Table~\ref{tab:null-calibration}, and
\(\beta=0.01\) could only increase or leave unchanged the rejection set for
the same stored scores. We selected the middle value on this empirical basis.
The reported intervals describe the selection study; they do not constitute
an independent, post-selection validation of the threshold.

The ridge sensitivity analysis used
\[
 \lambda\in\{10^{-8},10^{-6},10^{-4},10^{-2}\}.
\]
Across this range, sparse-VAR AUPRC was 0.9583--0.9587 for \fastcgc{} and
0.9787 at the reported precision for \fastcgcstar{}. The corresponding
stationary-null false-positive rates remained 0.0010 and 0.0000. At the true
maximum lag of two, \fastcgc{} obtained AUPRC 0.9583 (95\% interval
\([0.9428,0.9719]\)) and F1 0.8858 (\([0.8546,0.9134]\)). \fastcgcstar{}
obtained AUPRC 0.9787 (\([0.9665,0.9886]\)) and F1 0.8974
(\([0.8732,0.9215]\)). Fitting four lags reduced these values to 0.9315 and
0.8496 for \fastcgc{}, and to 0.9537 and 0.8714 for \fastcgcstar{}. The
one-lag cells score only represented lags and omit a mean 2.84 true edges, so
they diagnose underfitting rather than complete-graph recovery.

\subsubsection{Wall-clock comparison}
\label{sec:runtime-comparison}

The direct runtime comparison used one Apple M5 Max host with thread pools
limited to one. Each method ran in a fresh process on the same
simulated series within a design cell. After one untimed warm-up, five repeats
timed fitting and graph construction together. The grid crossed 5 or 20
variables, 250 or 1000 observations, and fitted order 1 or 5. \origcgc{} variants
used 1000 permutations, whereas \fastcgc{} variants used their complete
shift computation. The measurements therefore compare the end-to-end
algorithms, not an isolated linear-algebra kernel.

\begin{table*}[t]
\centering
\caption{Single-thread wall-clock time in seconds. Each method entry is mean
(sample SD) across five timed repeats; the speedup is the conventional mean
divided by the corresponding fast mean. Repeats quantify timing noise on one
generated series per design cell.}
\label{tab:runtime}
\scriptsize
\begin{tabular}{@{}rrrccrccr@{}}
\toprule
\(n\) & \(T\) & \(p\) & \origcgc{} & \fastcgc{} & Speedup & \origcgcstar{} & \fastcgcstar{} & Speedup\\
\midrule
\multirow{4}{*}{5}
 & 250  & 1 & .1512 (.0008) & .00221 (.00012) & 68.4 & .1508 (.0005) & .00063 (.00004) & 238.6\\
 & 250  & 5 & .4787 (.0078) & .00700 (.00012) & 68.4 & .4718 (.0017) & .00116 (.00004) & 407.4\\
 & 1000 & 1 & .4533 (.0010) & .00289 (.00009) & 157.0 & .4547 (.0012) & .00137 (.00005) & 331.6\\
 & 1000 & 5 & 1.4132 (.0016) & .01397 (.00020) & 101.2 & 1.4241 (.0044) & .00742 (.00014) & 191.9\\
\midrule
\multirow{4}{*}{20}
 & 250  & 1 & 2.6851 (.0155) & .01343 (.00022) & 200.0 & 2.6845 (.0112) & .00729 (.00005) & 368.4\\
 & 250  & 5 & 13.7736 (.0266) & .05244 (.00053) & 262.6 & 13.9775 (.0092) & .01577 (.00030) & 886.3\\
 & 1000 & 1 & 7.8967 (.0291) & .02849 (.00056) & 277.2 & 7.9284 (.0123) & .02019 (.00027) & 392.7\\
 & 1000 & 5 & 35.5170 (.0610) & .15745 (.00099) & 225.6 & 35.9451 (.0220) & .11434 (.00128) & 314.4\\
\bottomrule
\end{tabular}
\end{table*}

Table~\ref{tab:runtime} shows that, across the eight cells, the speedup range
was 68.4--277.2 for \fastcgc{} and
191.9--886.3 for \fastcgcstar{}; the geometric means were 149.1 and 354.5.
These values are specific to the compared algorithms, grid, and host. The five
repeats within a cell are not independent
datasets, so no significance test is attached to the timing ratios.

\subsection{Results}
\label{sec:results}

\subsubsection{Stable sparse VAR}
\label{sec:scenario-sparse-var}

The first controlled process is a stable sparse vector autoregression,
\begin{equation}
 X_t=\sum_{\ell=1}^{p}A_{\ell}X_{t-\ell}+\varepsilon_t.
\label{eq:comparison-var}
\end{equation}
At lag one, each variable receives an autoregressive coefficient drawn
uniformly from \([0.20,0.40]\). Cross-variable support is sampled independently
at probability 0.12 over all lags, subject to at least one cross-variable
parent per target. Nonzero cross coefficients have magnitude drawn uniformly
from \([0.12,0.35]\) and an independently sampled sign. The coefficient
matrices are scaled together when necessary to keep the companion spectral
radius at or below 0.8.

After 500 burn-in observations, the experiment retains 250, 500, or 1000
samples from systems with 5, 10, or 20 variables and order 1, 2, or 5.
Innovations are standard Gaussian or zero-mean unit-variance Laplace. The full
factorial grid contains 50 seeds per cell, beginning at seed 10,000, for 2700
systems with exact lag-resolved truth.

\begin{table*}[t]
\centering
\caption{Sparse-VAR recovery, reported as mean (sample SD) across 2700
graphs. Binary metrics for the four RCCP algorithms use the selected
\((\alpha,\beta)=(0.01,0.005)\) rule. Dashes mark an unavailable metric or
estimand.}
\label{tab:sparse-var-recovery}
\footnotesize
\begin{tabular}{@{}lcccc@{}}
\toprule
Method & Aggregate AUPRC & Aggregate F1 & Exact-lag AUPRC & Exact-lag F1\\
\midrule
\origcgc{} & .952 (.058) & .873 (.124) & .886 (.104) & .832 (.135)\\
\fastcgc{} & .953 (.058) & .842 (.154) & .902 (.100) & .806 (.163)\\
\origcgcstar{} & .952 (.058) & .863 (.130) & .916 (.098) & .840 (.146)\\
\fastcgcstar{} & .953 (.058) & .828 (.163) & .917 (.097) & .803 (.178)\\
PCMCI+ & .943 (.063) & .879 (.123) & .891 (.116) & .857 (.138)\\
Bagged PCMCI+ & .964 (.044) & .864 (.140) & .933 (.076) & .842 (.155)\\
CEDAR & -- & .865 (.118) & -- & .816 (.141)\\
CUTS+ & .810 (.147) & .592 (.260) & -- & --\\
\midrule
Edge prevalence & .308 (.142) & -- & .148 (.054) & --\\
\bottomrule
\end{tabular}
\end{table*}

On sparse VAR systems, the fast algorithms closely matched their c-GC
counterparts in aggregate AUPRC but had lower F1 at the selected thresholds
(Table~\ref{tab:sparse-var-recovery}). \fastcgc{} achieved mean AUPRC 0.953
and F1 0.842, compared with 0.952 and 0.873 for \origcgc{}. The corresponding
\fastcgcstar{} values were 0.953 and 0.828, compared with 0.952 and 0.863 for
\origcgcstar{}. Bagged PCMCI+ had the highest aggregate and exact-lag AUPRC.
The paired aggregate-AUPRC differences from the c-GC baselines were small
(0.00081 and 0.00047), despite remaining significant after Holm correction.
For \fastcgc{}, performance improved with sample size and declined with
variable count and fitted order. Appendix~\ref{app:comparisons} gives the
contrasts and sensitivity summaries.

\subsubsection{Autocorrelated additive structural process}
\label{sec:scenario-autocorrelated-scm}

The second controlled process follows the autocorrelated additive structural
causal model used by \citet{debeire2024},
\begin{equation}
 X_t^j=a_jX_{t-1}^j
 +\sum_i c_i f_i\!\left(X_{t-\tau_i}^i\right)+\eta_t^j.
\label{eq:comparison-autocorrelated-scm}
\end{equation}
Cross-link lags are sampled from 1 through 5, coefficients from
\(\{\pm0.1,\pm0.2,\ldots,\pm0.5\}\), and innovation scales from \([0.5,2]\).
The reference cell has 10 variables, 500 observations, maximum autoregressive
coefficient 0.6, \(\lfloor1.5n\rfloor\) cross-links for \(n\) variables,
identity link functions, and Gaussian innovations.

Nine additional cells vary the variable count (5 or 20), sample count (250 or
1000), maximum autoregressive coefficient (0.2 or 0.95), link function, or
innovation family. The nonlinear option is
\(f(x)=x+5x^2\exp(-x^2/20)\); the non-Gaussian option uses standardized
Weibull innovations. Each cell has 50 requested seeds, and a cross-link-free
cell measures false positives. Contemporaneous links are disabled because the
eight methods do not return a common contemporaneous estimand.

\begin{table*}[t]
\centering
\caption{Autocorrelated-SCM recovery. AUPRC uses the 450 non-null graphs; F1
also includes the 50 cross-link-free graphs. Binary metrics for the four RCCP
algorithms use \((\alpha,\beta)=(0.01,0.005)\). Dashes mark an unavailable
metric or estimand.}
\label{tab:autocorrelated-recovery}
\footnotesize
\begin{tabular}{@{}lcccc@{}}
\toprule
Method & Aggregate AUPRC & Aggregate F1 & Exact-lag AUPRC & Exact-lag F1\\
\midrule
\origcgc{} & .844 (.113) & .717 (.265) & .544 (.176) & .570 (.239)\\
\fastcgc{} & .851 (.107) & .693 (.266) & .659 (.197) & .569 (.249)\\
\origcgcstar{} & .826 (.129) & .682 (.274) & .754 (.172) & .665 (.273)\\
\fastcgcstar{} & .836 (.116) & .658 (.282) & .771 (.153) & .650 (.282)\\
PCMCI+ & .844 (.110) & .739 (.265) & .777 (.148) & .726 (.265)\\
Bagged PCMCI+ & .888 (.090) & .735 (.266) & .839 (.121) & .729 (.266)\\
CEDAR & -- & .737 (.263) & -- & .720 (.262)\\
CUTS+ & .474 (.127) & .347 (.177) & -- & --\\
\midrule
Edge prevalence & .160 (.083) & -- & .032 (.017) & --\\
\bottomrule
\end{tabular}
\end{table*}

Bagged PCMCI+ had the highest aggregate and exact-lag AUPRC
(Table~\ref{tab:autocorrelated-recovery}). Longer series improved
\fastcgc{} AUPRC, while strong autocorrelation and nonlinear links reduced it.
Exact-lag recovery separated the RCCP conditioning sets more clearly than
aggregate recovery: mean exact-lag AUPRC was 0.659 for \fastcgc{} and 0.771
for \fastcgcstar{}. Figure~\ref{fig:exact-lag-auprc} summarizes both controlled
suites.

\begin{figure*}[t]
\centering
\includegraphics[width=0.88\textwidth]{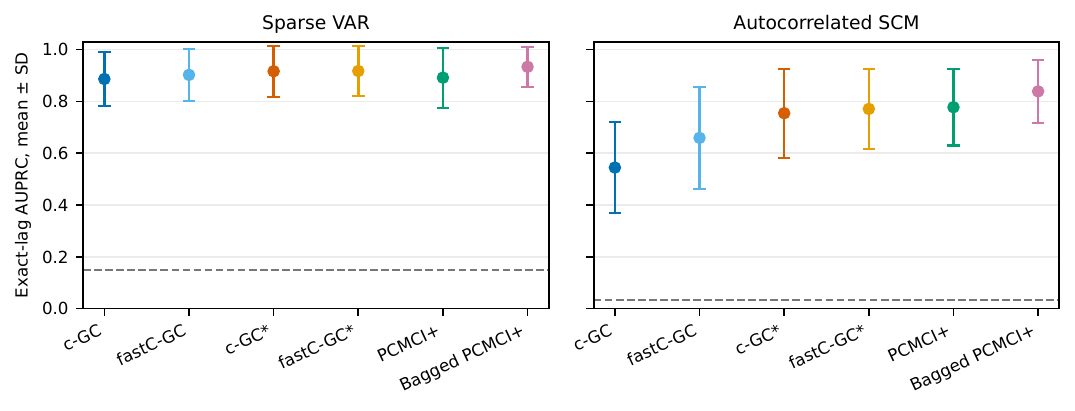}
\caption{Exact-lag AUPRC over directed cross-variable candidates at lags
1 through the fitted maximum; self-edges are excluded. Points show means,
and bars show sample standard deviations across graphs, not confidence
intervals for the means. Dashed lines show mean edge prevalence, which varies
across graphs. CEDAR is omitted because it supplies no continuous final-edge
scores; CUTS+ is omitted because it returns only an aggregate graph.}
\label{fig:exact-lag-auprc}
\end{figure*}

The fast methods' aggregate-AUPRC differences from their c-GC baselines were
0.00682 and 0.00947; only the latter remained significant after Holm correction.
Aggregate F1 was lower for both fast methods, whereas exact-lag F1 for
\fastcgc{} and \origcgc{} was nearly equal (0.569 and 0.570).
Appendix~\ref{app:comparisons} gives the full contrasts.

In the cross-link-free cell, the mean false positives per target for
\origcgc{}, \fastcgc{}, \origcgcstar{}, and \fastcgcstar{} were 0.140, 0.078,
0.070, and 0.054, respectively. PCMCI+, Bagged PCMCI+, CEDAR, and CUTS+
obtained 0.172, 0.072, 0.164, and 0.022.

\subsubsection{Lorenz--96}
\label{sec:scenario-lorenz96}

The third controlled process uses the Lorenz--96 system supplied with CUTS+
\citep{cheng2024},
\begin{equation}
 \frac{dX_i}{dt}=(X_{i+1}-X_{i-2})X_{i-1}-X_i+F,
\label{eq:lorenz96}
\end{equation}
with cyclic indices. The generator sets \(F=10\), uses \(\Delta t=0.1\),
discards 1000 samples, and adds independent Gaussian measurement noise with
standard deviation 0.1. The grid contains 10 and 20 variables, 250, 500, and
1000 retained observations, and 50 seeds per cell beginning at seed 20,000.
All methods use maximum lag 5. The target is the aggregate structural graph in
which each state depends on itself and its three Lorenz--96 neighbors; a unique
lag target is not available.

\begin{table}[t]
\centering
\caption{Aggregate Lorenz--96 recovery, mean (sample SD) across 300 systems.
Binary metrics for the four RCCP algorithms use
\((\alpha,\beta)=(0.01,0.005)\). A dash marks an unavailable ranking metric.}
\label{tab:lorenz-recovery}
\footnotesize
\begin{tabular}{@{}lcc@{}}
\toprule
Method & AUPRC & F1\\
\midrule
\origcgc{} & .692 (.091) & .596 (.130)\\
\fastcgc{} & .683 (.097) & .519 (.167)\\
\origcgcstar{} & .648 (.078) & .481 (.117)\\
\fastcgcstar{} & .667 (.081) & .404 (.156)\\
PCMCI+ & .596 (.104) & .480 (.131)\\
Bagged PCMCI+ & .781 (.079) & .430 (.151)\\
CEDAR & -- & .352 (.079)\\
CUTS+ & .948 (.090) & .857 (.113)\\
\midrule
Edge prevalence & .246 (.088) & --\\
\bottomrule
\end{tabular}
\end{table}

CUTS+ had the highest mean AUPRC (0.948) and F1 (0.857)
(Table~\ref{tab:lorenz-recovery}), reversing its lower ranking on the two
stochastic processes. Relative to their c-GC baselines, aggregate AUPRC was
0.00923 lower for \fastcgc{} and 0.01854 higher for \fastcgcstar{}; both
differences remained significant after Holm correction. F1 fell from 0.596 to
0.519 for \fastcgc{} and from 0.481 to 0.404 for \fastcgcstar{}. Thus, the
runtime gains again accompanied lower thresholded recovery performance.
The reference is the continuous-time structural graph; sampled predictive
relations need not coincide with this target.

\subsubsection{DREAM3 in-silico gene networks}
\label{sec:scenario-dream3}

DREAM3 Size 100 adds five in-silico gene-regulatory systems
\citep{prill2010}: two \emph{E. coli} networks and three yeast networks, each
with 100 genes. The challenge networks were derived from regulatory topologies
and simulated with continuous differential equations and Gaussian measurement
noise. We converted the licensed data to a common variable-by-time orientation.
Because the reference networks have no unique lag assignment, recovery is
scored only on the aggregate directed graph.

\begin{table*}[t]
\centering
\caption{Aggregate AUPRC on the five DREAM3 Size 100 networks. The networks are
distinct systems, not repeated seeds. Dashes mark unavailable metrics.}
\label{tab:dream3-networks}
\footnotesize
\begin{tabular}{@{}lccccc@{}}
\toprule
Method & Network 0 & Network 1 & Network 2 & Network 3 & Network 4\\
\midrule
\origcgc{} & .0463 & .0320 & .0550 & .0587 & .0666\\
\fastcgc{} & .0584 & .0355 & .0584 & .0599 & .0670\\
\origcgcstar{} & .0565 & .0402 & .0556 & .0722 & .0662\\
\fastcgcstar{} & .0691 & .0412 & .0588 & .0706 & .0663\\
PCMCI+ & .1242 & .1062 & .0937 & .0880 & .0699\\
Bagged PCMCI+ & .1410 & .1189 & .1225 & .0969 & .0836\\
CEDAR & -- & -- & -- & -- & --\\
CUTS+ & .0387 & .0215 & .0572 & .0485 & .0678\\
\bottomrule
\end{tabular}
\end{table*}

Bagged PCMCI+ had the highest mean AUPRC (0.1126), followed by PCMCI+
(0.0964); the fast methods achieved 0.0558 and 0.0611
(Table~\ref{tab:dream3-networks}). F1 was low for all four RCCP algorithms
(0.0616--0.0800), and both fast methods were below their c-GC counterparts.
With five heterogeneous networks, none of the eight paired AUPRC contrasts
remained significant after Holm correction. These results support a descriptive
comparison, with limited evidence for generalization across gene networks.

\subsubsection{Causal Chambers wind-tunnel measurements}
\label{sec:scenario-causal-chambers}

The first measured-data case study uses the Causal Chambers
\nolinkurl{wt_walks_v1} wind-tunnel recordings~\citep{gamella2025}. The
analysis uses the \nolinkurl{actuators_random_walk_1} experiment, 16 variables
from the authors' time-series case study, per-variable min--max normalization,
and maximum fitted lag 10. The variables cover hatch position, actuator loads
and potentiometers, input and output currents, upstream and downstream
pressures, fan speeds, a microphone, intake and ambient pressures, and two
amplifier signals. The distributed physical network is displayed as context,
not treated as exhaustive lag-resolved truth.
Assumptions~\ref{ass:causal-markov}--\ref{ass:present-sink} are not established
for these measurements. The
inferred networks are therefore interpreted descriptively rather than as
validated causal graphs.

Bagged PCMCI+, CEDAR, \origcgc{}, and \origcgcstar{} returned 4, 11, 8, and 4
cross-variable edges. CUTS+, \fastcgc{}, \fastcgcstar{}, and PCMCI+ returned
83, 5, 4, and 7 edges, respectively (Figure~\ref{fig:causal-chamber-graphs}).
The \fastcgc{} and \origcgc{} edge sets had
Jaccard similarity 0.625; \fastcgcstar{} and \origcgcstar{} returned the same four
aggregate edges. The five-edge \fastcgc{} graph and eight-edge \origcgc{} graph
confirm that the fast method must be evaluated as a separate algorithm. CUTS+
produced a much denser graph than the other methods, but the contextual network
does not support an efficacy ranking. For a concrete comparison, both
\fastcgc{} and \origcgc{} selected output fan speed \(\to\) upwind pressure
(nodes 11 and 9), linking a fan-speed measurement with pressure upstream in the
tunnel. This pair is absent as a direct edge from the contextual network.
Only \origcgc{} selected output current \(\to\) upwind pressure
(nodes 7 and 9). These are selected lagged associations; neither agreement
between methods nor physical proximity establishes a causal edge.
Figure~\ref{fig:causal-chamber-graphs} compares all eight methods.

\begin{figure*}[!t]
\centering
\begin{minipage}[c]{0.8\textwidth}
\centering
\graphpanel{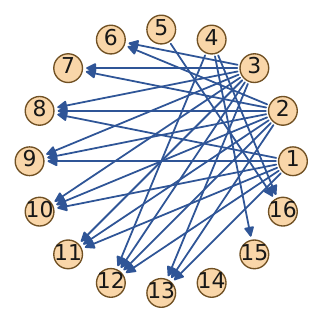}{a}{Physical network (context)}
\hfill
\graphpanel{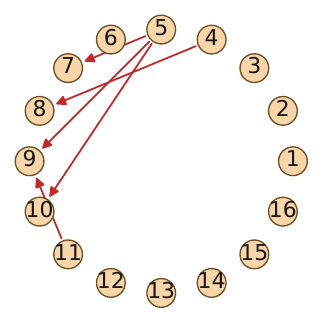}{b}{\fastcgc{}}
\hfill
\graphpanel{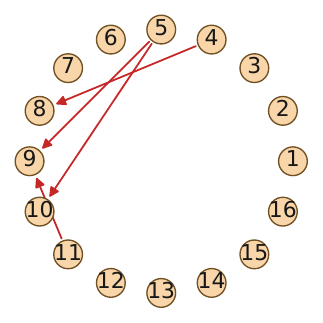}{c}{\fastcgcstar{}}\\[1.2ex]
\graphpanel{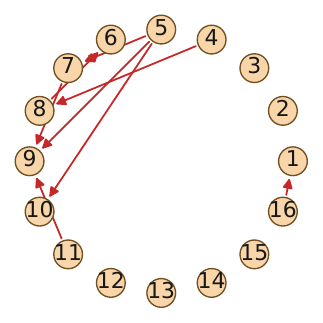}{d}{\origcgc{}}
\hfill
\graphpanel{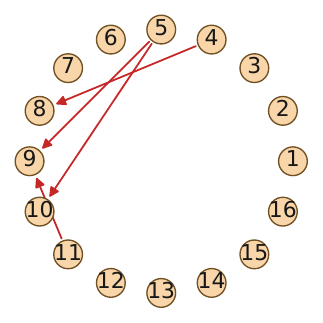}{e}{\origcgcstar{}}
\hfill
\graphpanel{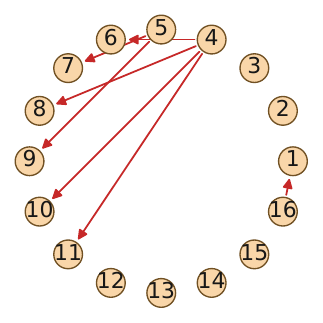}{f}{PCMCI+}\\[1.2ex]
\graphpanel{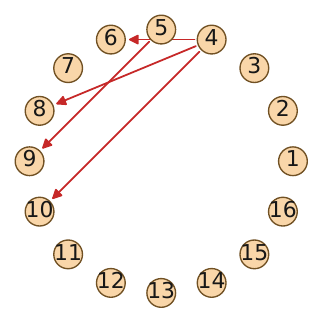}{g}{Bagged PCMCI+}
\hfill
\graphpanel{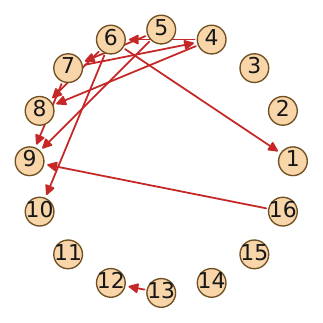}{h}{CEDAR}
\hfill
\graphpanel{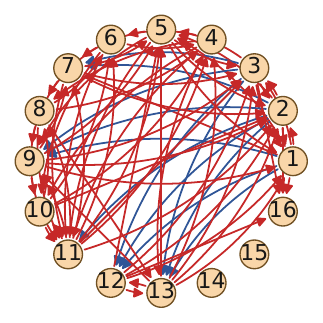}{i}{CUTS+}
\end{minipage}
\hfill
\chamberkey
\caption{Causal Chambers wind-tunnel graphs with a shared circular node layout.
Panel (a) shows the physical network for context, not exhaustive lag-resolved
truth. In (b)--(i), an arrow denotes a selected cross-variable link at any
positive lag from 1 through 10; self-edges are excluded.
\textcolor[HTML]{2F5597}{Blue} inferred arrows are present as directed edges
in (a); \textcolor[HTML]{C62828}{red} arrows are absent. Absence from this contextual network does not establish
a false positive. Width is uniform and does not encode a score; curvature
separates reciprocal links. The variable key applies to every panel.}
\label{fig:causal-chamber-graphs}
\end{figure*}

\subsubsection{CausalRivers discharge measurements}
\label{sec:scenario-causal-rivers}

The second measured-data case study uses CausalRivers discharge measurements
from eastern Germany and Bavaria~\citep{stein2025}. Following the authors'
connected-subgraph procedure, we select five five-station subgraphs per region
at evenly spaced positions in the sorted candidate list.
Raw 15-minute observations are rounded and averaged into 6-hour bins. Each
station is min--max normalized, internal missing values are linearly
interpolated, and incomplete rows at the beginning and end are removed. The
methods use maximum lag 5. River connectivity and station coordinates are
shown as geographic context rather than exhaustive causal truth.
Assumptions~\ref{ass:causal-markov}--\ref{ass:present-sink} are not established
for these measurements. The
inferred networks are therefore interpreted descriptively rather than as
validated causal graphs.

Across the ten subgraphs, the mean (SD) edge counts for \origcgc{}, \fastcgc{},
\origcgcstar{}, and \fastcgcstar{} were 10.9 (4.6), 9.4 (5.4), 12.2 (5.0), and
10.3 (4.8), respectively. PCMCI+, Bagged PCMCI+, CEDAR, and CUTS+ obtained
9.1 (2.4), 6.5 (2.1), 6.5 (2.7), and 7.3 (2.3). CEDAR's individual counts
ranged from 1 to 9. Mean Jaccard similarity was 0.795 (SD 0.149) for
\fastcgc{} versus \origcgc{} and 0.841 (SD 0.123) for \fastcgcstar{} versus \origcgcstar{}.
Figure~\ref{fig:causal-rivers-graphs} shows an illustrative Bavaria subgraph:
the first of the selected subgraphs in sorted candidate order, containing
station IDs 0, 94, 503, 504, and 505. Both \fastcgc{} and \origcgc{} selected
0 \(\to\) 503 (A \(\to\) C), which also appears in the supplied river-connectivity
network. Only \origcgc{} selected 0 \(\to\) 505 and 94 \(\to\) 504
(A \(\to\) E and B \(\to\) D). The methods selected different edge sets;
the available river network does not establish which inferred edges are causal.

\begin{figure*}[!t]
\centering
\begin{minipage}[c]{0.8\textwidth}
\centering
\graphpanel{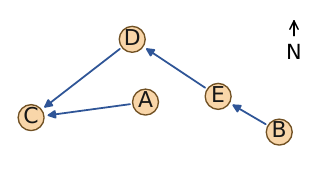}{a}{River connectivity (context)}
\hfill
\graphpanel{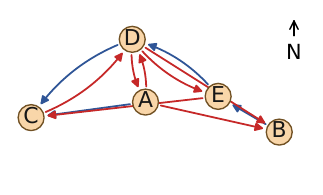}{b}{\fastcgc{}}
\hfill
\graphpanel{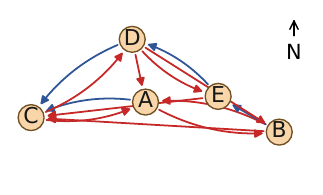}{c}{\fastcgcstar{}}\\[1.2ex]
\graphpanel{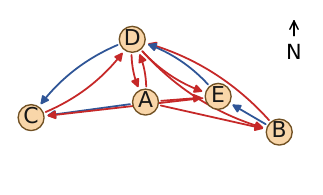}{d}{\origcgc{}}
\hfill
\graphpanel{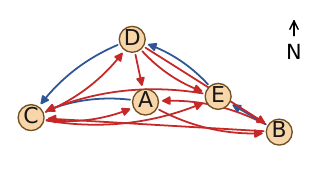}{e}{\origcgcstar{}}
\hfill
\graphpanel{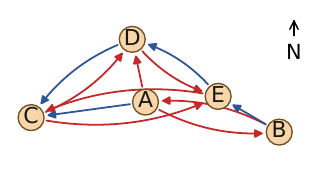}{f}{PCMCI+}\\[1.2ex]
\graphpanel{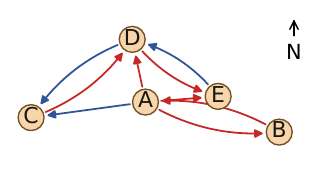}{g}{Bagged PCMCI+}
\hfill
\graphpanel{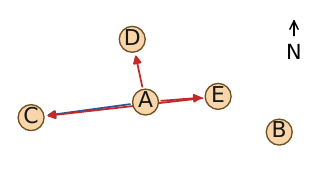}{h}{CEDAR}
\hfill
\graphpanel{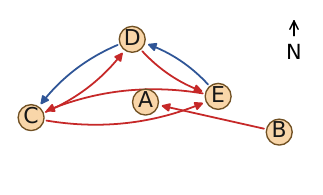}{i}{CUTS+}
\end{minipage}
\hfill
\riverskey
\caption{An illustrative CausalRivers Bavaria subgraph, with longitude
horizontal and latitude vertical in all panels; N marks north. In (a), arrows
follow the supplied downstream
river-connectivity network, which is contextual rather than exhaustive causal
truth. In (b)--(i), an arrow denotes a selected cross-station link at any
positive lag from 1 through 5; self-edges are excluded.
\textcolor[HTML]{2F5597}{Blue} inferred arrows are present as directed edges
in (a); \textcolor[HTML]{C62828}{red} arrows are absent. This distinction does
not establish causal correctness. Width does not encode a score, and curved
arrows separate reciprocal links. The station-ID key applies to every panel.}
\label{fig:causal-rivers-graphs}
\end{figure*}

\subsection{Limitations}
\label{sec:experimental-limitations}

The \(\beta=0.005\) choice is informed by empirical null rejection rates
in the examined settings. It does not establish uniform edgewise calibration,
conditional-test validity, or graph-wide error control. In particular, the null
study does not separately establish calibration for common-cause configurations
with marginal dependence and conditional independence. The drifting stress
test increased null rejection, and the probability of at least one false edge
reached 0.30. Independent null simulations after threshold selection are not reported;
they would provide a separate assessment of the selected threshold. The original \(\beta=0.001\) rule
produces empty fast-method graphs when \(L<1000\).

The ridge sensitivity grid spans four orders of magnitude, but it does not
establish invariance to regularization outside the tested processes. Likewise,
the maximum-lag experiment documents under- and over-specification around one
second-order sparse VAR; it does not supply a data-driven lag-selection rule.

The Bagged PCMCI+ score correction assumes that the positive-lag graph has only
two marks: a directed edge and no edge. Additional mark types would require
edge frequencies computed from the individual bootstrap graphs.

CEDAR can be compared only through its final binary graphs; converting its
intermediate p-values or adjacency into a ranking would not measure the final
selection step. Its first-lag policy may also miss additional true lags for the
same ordered variable pair.

The controlled study is limited to observed variables, stationary processes,
the declared innovation and link-function families, and the specified lag orders.
Shared latent causes, contemporaneous effects, structural breaks, missing
observations, and data-driven lag selection require separate experiments.
DREAM3 contributes five heterogeneous networks, which limits inference from
its across-network mean.

The reported rankings depend on both the data-generating process and the
evaluation target. The Lorenz--96 and DREAM3 reference graphs specify
structural relations without a unique sampled lag, while the measured-data
networks provide context rather than exhaustive truth. Preprocessing,
including river-data interpolation and temporal aggregation, can also alter
lagged associations.

The runtime comparison measures end-to-end implementations on one host and
one finite grid. It does not isolate the contributions of FFT evaluation,
shared matrix calculations, and estimator changes. The reported experiments include no computational ablation
or peak-memory measurement, so the observed speedups cannot be
attributed to individual components or assumed to transfer to other hardware.

\FloatBarrier

\section{Conclusion}
\label{sec:conclusion}

\fastcgc{} and \fastcgcstar{} combine the RCCP two-test rule and established
conditioning sets with standardized correlation estimates, ridge residuals,
and complete FFT-evaluated circular shift orbits. Shared precision calculations
and Schur downdates yield substantial computational savings: on the tested
single-thread grid, the methods were respectively 68--277 and 192--886 times
faster than the compared c-GC implementations.

These gains accompany a change in estimator. Across four reference-graph
suites, aggregate mean AUPRC differed from the corresponding c-GC baseline by
at most 0.019, but F1 was lower for both fast methods in every suite at the
selected thresholds. The methods are therefore useful computational alternatives
when ranking performance and runtime are priorities; agreement of selected
graphs must be assessed separately. Bagged PCMCI+ led three suites in mean
AUPRC, while CUTS+ led on Lorenz--96, underscoring dependence on the dynamical
regime.

The algebraic results justify the calculations. The population adjacency
result for \fastcgcstar{} additionally requires the stated graphical
assumptions and does not establish consistency of the implemented fixed-ridge
estimator. Empirical null rejection rates informed \(\beta=0.005\), but do
not establish general test validity or graph-wide error control. The measured
data illustrate differences between inferred graphs without identifying the
causally correct output. Priorities for further work are conditional-null
validation, multiplicity control, and computational ablations with memory
measurements.

\section*{Code availability}
The source code and experiments are available at\par\noindent
\url{https://github.com/adedayoas91/fastC-GC}.

\appendix
\section{Supporting algebra and diagnostics}
\label{app:algebra}

The following proofs support the statements in Section~\ref{sec:method}.
The diagnostic transform below is recorded by the implementation but does not
enter the graph decision.

\subsection{FFT orbit identity}
\begin{proof}
Expanding the inverse transform gives
\[
 \frac{1}{L}\sum_{k=0}^{L-1}
 \left(\sum_{u=0}^{L-1}x_u e^{-2\pi iku/L}\right)
 \left(\sum_{v=0}^{L-1}y_v e^{2\pi ikv/L}\right)
 e^{2\pi iks/L}.
\]
Interchanging sums, the inner sum over \(k\) is \(L\) when
\(v-u+s\equiv0\pmod L\) and zero otherwise. What remains is
\(\sum_u x_u y_{(u-s)\bmod L}=c_s\). A real FFT stores only the
nonredundant half-spectrum but computes the same inverse transform.
\end{proof}

\subsection{Ridge existence and perturbation}
\begin{proof}
For nonzero \(v\),
\(v^\top(S+\lambda I)v=v^\top Sv+\lambda\|v\|_2^2>0\).
Diagonalizing \(S=U\operatorname{diag}(\mu_k)U^\top\), the eigenvalues of
\((S+\lambda I)^{-1}\) are \(1/(\mu_k+\lambda)\). If all \(\mu_k>0\), the
eigenvalues of the difference from \(S^{-1}\) have magnitude
\(\lambda/[\mu_k(\mu_k+\lambda)]\), maximized at \(\mu_{\min}\).
\end{proof}

\subsection{Affine scale invariance}
\begin{proof}
The standardized expanded matrix becomes
\(\widetilde Z'=D\widetilde Z\), where \(D\) is diagonal with entries
\(\operatorname{sign}(c_a)\) repeated across lag blocks. Since \(D^2=I\),
\[
\begin{aligned}
 S'&=DSD,\\
 S'+\lambda I&=D(S+\lambda I)D,\\
 P'&=DPD.
\end{aligned}
\]
Absolute marginal and precision correlations are therefore unchanged. For a
pair \(e\) and controls \(B\),
\[
\begin{aligned}
 R'_e
 &=D_e\widetilde Z_e-D_eS_{eB}D_B\\
 &\quad{}\times\bigl[D_B(S_{BB}+\lambda I)D_B\bigr]^{-1}
   D_B\widetilde Z_B\\
 &=D_eR_e.
\end{aligned}
\]
Sign changes leave the absolute residual correlations and their absolute
circular shift orbits unchanged.
\end{proof}

\subsection{Optional precision diagnostics}
\label{app:diagnostics}
Set \(q=q_{ij}^A\). This precision quantity is cheap to compute for many pairs. With
\(d=|A|-2\) controls, it maps to the conventional correlation statistic
\begin{equation}
\begin{aligned}
 \nu&=\max(L-d-2,1),\\
 t&=q\sqrt{\frac{\nu}{1-q^2}},\\
 p_{\mathrm{scr}}&=2\,\Pr(T_\nu\geq t).
\end{aligned}
\label{eq:diagnostic-p}
\end{equation}
This Student-\(t\) calculation assumes independent Gaussian observations and an
unregularized partial correlation. It is therefore not a calibrated \(p\)-value
for general autocorrelated ridge data. We report \(q_{ij}^A\) and
\(p_{\mathrm{scr}}\) as diagnostics but do not use them to select candidates.
Every allowed conditional pair receives ridge-residual recovery and complete
circular shift calibration.

\section{Detailed benchmark comparisons}
\label{app:comparisons}

These numerical comparisons supplement the principal results in
Section~\ref{sec:results}. Differences are descriptive unless an inferential
result is stated. Statistical significance does not imply estimator equivalence.

\subsection{Sparse VAR}
Bagged PCMCI+ had the highest mean aggregate and exact-lag AUPRC
(Table~\ref{tab:sparse-var-recovery}). CEDAR had aggregate F1 0.8653 and
exact-lag F1 0.8160; its aggregate precision was 0.9748 and recall was 0.7952.
The two fast algorithms and their c-GC baselines were close on the aggregate
graph. At the lag-resolved level,
\fastcgc{} was 0.0160 above \origcgc{} in mean AUPRC, whereas \fastcgcstar{}
and \origcgcstar{} differed by 0.0009. These exact-lag differences are descriptive; the
declared paired tests concern aggregate AUPRC.

The \fastcgc{} aggregate score improved from 0.9027 at 250 observations to
0.9882 at 1000 observations. It declined from 0.9763 with five variables to
0.9252 with 20, and from 0.9635 at order one to 0.9415 at order five. Gaussian
and Laplace innovations differed by 0.0028 in the pooled means. Per-graph AUROC
correlations were 0.9965 for \fastcgc{} versus \origcgc{} and 0.9970 for
\fastcgcstar{} versus \origcgcstar{}; the corresponding AUPRC correlations were 0.9845
and 0.9870.

In the paired aggregate analysis, \fastcgc{} minus \origcgc{} had mean AUPRC
difference 0.00081 (95\% bootstrap interval \([0.00043,0.00121]\), Holm
\(p=0.00039\)); \fastcgcstar{} minus \origcgcstar{} differed by 0.00047
(\([0.00013,0.00080]\), Holm \(p=0.0153\)). Both fast algorithms were about
0.010 above PCMCI+ and 0.143 above CUTS+, whereas Bagged PCMCI+ was about 0.011
higher than either fast algorithm (all four external contrasts Holm
\(p<2\times10^{-12}\)). The small baseline differences are evidence of similar
rankings, not numerical identity.
At the selected thresholds, aggregate F1 was 0.8421 for \fastcgc{} versus
0.8726 for \origcgc{}, and 0.8282 for \fastcgcstar{} versus 0.8625 for
\origcgcstar{}. The exact-lag differences had the same direction. Thus the
near-equality of the AUPRC rankings does not imply equality of the selected
edge sets.

\subsection{Autocorrelated SCM: recovery and sensitivity}
Bagged PCMCI+ again had the highest aggregate and exact-lag AUPRC
(Table~\ref{tab:autocorrelated-recovery}). CEDAR's aggregate F1 was 0.7368,
with precision 0.8410 and recall 0.6669; its exact-lag F1 was 0.7200. In the
reference cell, aggregate AUPRC was 0.8942 for \fastcgc{}, 0.8892 for
\fastcgcstar{}, 0.8912 for \origcgc{}, and 0.8853 for \origcgcstar{}. PCMCI+,
Bagged PCMCI+, and CUTS+ obtained 0.8751, 0.9091, and 0.4946, respectively.
Increasing the series length from 250 to 1000 raised \fastcgc{} AUPRC
from 0.8195 to 0.9389. Raising the maximum autoregressive coefficient from 0.20
to 0.95 reduced it from 0.8889 to 0.7750, and the mixed nonlinear-link cell
reached 0.6894.

The exact-lag results separate the two RCCP conditioning sets more clearly.
Mean AUPRC was 0.5445 for \origcgc{} and 0.6590 for \fastcgc{}, compared with 0.7543
for \origcgcstar{} and 0.7707 for \fastcgcstar{}. Bagged PCMCI+ reached 0.8385. Figure
\ref{fig:exact-lag-auprc} shows both controlled suites and includes the
lag-specific \origcgc{}/\origcgcstar{} results obtained from their recorded
\(p\)-values.

\subsection{Autocorrelated SCM: paired contrasts}
Per-graph aggregate AUROC correlations were 0.9955 for \fastcgc{} versus \origcgc{}
and 0.9956 for \fastcgcstar{} versus \origcgcstar{}; aggregate AUPRC correlations were
0.9677 and 0.9717. The paired aggregate differences were 0.00682 for
\fastcgc{} minus \origcgc{} (Holm \(p=0.197\)) and 0.00947 for \fastcgcstar{} minus
\origcgcstar{} (Holm \(p=0.00366\)). \fastcgc{} was 0.0074 above PCMCI+ (Holm
\(p=0.0174\)); the \fastcgcstar{}--PCMCI+ difference of \(-0.0079\) did not
remain significant after correction. Bagged PCMCI+ was 0.0367 and 0.0520 above
the two fast algorithms, while both fast algorithms were more than 0.36 above
CUTS+.
The reported aggregate F1 values were 0.6927 for \fastcgc{} and 0.6579 for
\fastcgcstar{}, compared with 0.7169 and 0.6820 for their respective c-GC
comparators. At exact lags, \fastcgc{} and \origcgc{} were nearly equal
(0.5685 and 0.5700), while \fastcgcstar{} was 0.0145 below
\origcgcstar{}.

\subsection{Lorenz--96}
Table~\ref{tab:lorenz-recovery} shows that CUTS+ had the highest mean AUPRC
(0.9482) and F1 (0.8575), followed by Bagged PCMCI+ in AUPRC (0.7809).
CEDAR returned mean F1 0.3525, with precision 0.5067 and recall 0.2834.
\fastcgc{} was 0.00923 below \origcgc{}, while
\fastcgcstar{} was 0.01854 above \origcgcstar{}; both aggregate differences remained
significant after Holm correction. The corresponding AUROC correlations were
0.9934 and 0.9973. PCMCI+ was 0.0870 and 0.0712 below the fast algorithms;
Bagged PCMCI+ was 0.0981 and 0.1140 above them; and CUTS+ was 0.2655 and 0.2813
above them (all Holm \(p<10^{-13}\)). The reversal between CUTS+ here and its
lower values on the two stochastic processes shows that the method ranking is
specific to the dynamical regime.
At the selected thresholds, \fastcgc{} and \fastcgcstar{} had mean F1 values
of 0.5187 and 0.4037, compared with 0.5957 and 0.4806 for \origcgc{} and
\origcgcstar{}.

\subsection{DREAM3}
Bagged PCMCI+ had the highest mean AUPRC, 0.1126, followed by PCMCI+ at 0.0964
(Table~\ref{tab:dream3-networks}).
The means were 0.0517 for \origcgc{}, 0.0558 for \fastcgc{}, 0.0581 for \origcgcstar{},
0.0611 for \fastcgcstar{}, and 0.0468 for CUTS+. The per-network AUROC
correlations were 1.0000 for \fastcgc{} versus \origcgc{} and 0.9998 for
\fastcgcstar{} versus \origcgcstar{}. With only five heterogeneous networks, none of
the eight paired AUPRC contrasts remained significant after Holm correction
(all Holm \(p=0.500\)). These data support a descriptive comparison rather than
a population-level ranking.
With the selected thresholds, mean F1 was 0.0800 for \origcgc{}, 0.0728 for
\fastcgc{}, 0.0658 for \origcgcstar{}, and 0.0616 for \fastcgcstar{}.
CEDAR's mean F1 was 0.0665 (SD 0.0511), with values from 0.0205 to 0.1294
across the five networks. Its mean precision was 0.1835 and mean recall was
0.0428.

\balance
\bibliographystyle{cas-model2-names}
\bibliography{references}

\end{document}